\newcommand{\remove}[1]{}
\newcommand{\cD}{\mathcal{D}}
\newcommand{\cG}{\mathcal{G}}
\newcommand{\cE}{\mathcal{E}}
\newcommand{\ext}{\mathsf{Ext}}
\newcommand{\zo}{\{0,1\}}
\newcommand{\eps}{\varepsilon}
\newtheorem{thm}{Theorem}
\newtheorem{prop}[thm]{Proposition}
\newtheorem{claim}[thm]{Claim}
\theoremstyle{definition}
\newtheorem{theorem}[thm]{Theorem}
\newtheorem{lemma}[thm]{Lemma}
\newtheorem{definition}[thm]{Definition}
\newtheorem{corollary}[thm]{Corollary}
\newtheorem{remark}[thm]{Remark}
\newcommand*\samethanks[1][\value{footnote}]{\footnotemark[#1]}
\title{Secret Sharing with Binary Shares} 
\author{Fuchun Lin\thanks{Division of Mathematical Sciences, School of Physical and Mathematical Sciences, Nanyang Technological University, SG}
\and Mahdi Cheraghchi\thanks{Department of Computing, Imperial College London, UK} 
\and Venkatesan Guruswami\thanks{Computer Science Department, Carnegie Mellon University, USA}  
\and Reihaneh Safavi-Naini\thanks{Department of Computer Science, University of Calgary, CA}
\and Huaxiong Wang\samethanks[1]}
\begin{document}

\maketitle

\begin{abstract}
Shamir's celebrated secret sharing scheme provides an efficient method for encoding a secret of arbitrary length $\ell$ among any $N \leq 2^\ell$ players such that for a threshold parameter $t$, (i) the knowledge of any $t$ shares does not reveal any information about the secret and, (ii) any choice of $t+1$ shares fully reveals the secret. It is known that any such threshold secret sharing scheme necessarily requires shares of length $\ell$, and in this sense Shamir's scheme is optimal. The more general notion of ramp schemes requires the reconstruction of secret from any $t+g$ shares, for a positive integer gap parameter $g$. Ramp secret sharing scheme necessarily requires shares of length $\ell/g$. Other than the bound related to secret length $\ell$, the share lengths of ramp schemes can not go below a quantity that depends only on the gap ratio $g/N$.

In this work, we study secret sharing in the extremal case of bit-long shares and arbitrarily small gap ratio $g/N$, where standard ramp secret sharing becomes impossible. We show, however, that a slightly relaxed but equally effective notion of semantic security for the secret, and negligible reconstruction error probability, eliminate the impossibility. Moreover, we provide explicit constructions of such schemes. One of the consequences of our relaxation is that, unlike standard ramp schemes with perfect secrecy, adaptive and non-adaptive adversaries need  different analysis and construction. For non-adaptive adversaries, we explicitly construct secret sharing schemes that  provide secrecy against any $\tau$ fraction of observed shares, and reconstruction from any $\rho$ fraction of shares, for any choices of $0 \leq \tau < \rho \leq 1$. Our construction achieves secret length $N(\rho-\tau-o(1))$, which we show to be optimal. For adaptive adversaries, we construct explicit schemes attaining a secret length $\Omega(N(\rho-\tau))$. We discuss our results and open questions.
\end{abstract}

\section{Introduction}

\newcommand{\mnote}[1]{{{\color{red}[Mahdi]#1}}}
\newcommand{\FF}{\mathbb{F}}
Secret sharing, introduced independently by Blakley \cite{Blakley} and Shamir \cite{Shamir},
is one of the most fundamental cryptographic primitives with far-reaching
applications, such as being a major tool in secure multiparty computation (cf.\ \cite{ref:CDB15}). 
The general goal in secret sharing
is to encode a secret $\mathsf{s}$ into a number of \emph{shares} 
$\mathsf{X}_1, \ldots, \mathsf{X}_N$ that are distributed
among $N$ players such that only certain \emph{authorized subsets}
of the players can reconstruct the secret. 
An \emph{authorized} subset of players is a set $A \subseteq [N]$ such that
the set of shares with indices in $A$ can collectively be used to reconstruct the secret $\mathsf{s}$  ({\em perfect reconstructiblity}).
On the other hand, $A$ is an \emph{unauthorized} subset if the knowledge of
the shares with indices in $A$ reveals no information about the secret ({\em perfect privacy}).
The set  of 
 authorized and unauthorized sets define an  access structure,
of which the most widely used is the so-called
\emph{threshold structure}. A  
secret sharing scheme with {\em threshold} access structure, is defined
with respect to an integer parameter $t$ and satisfies the following properties.
Any set $A \subseteq [N]$ with $|A| \leq t$ is an unauthorized set.
That is, the knowledge of any $t$ shares, or fewer, does not reveal any information
about the secret. On the other hand, any set $A$ with $|A| > t$ is an authorized set.
That is, the knowledge of any $t+1$ or more shares completely reveals the secret.

Shamir's secret sharing scheme \cite{Shamir} gives an elegant construction for the threshold access structure that can be interpreted 
as   the  use of Reed-Solomon codes for encoding the secret.
Suppose the  secret
$\mathsf{s}$ is an $\ell$-bit string and $N \leq 2^\ell$. Then, Shamir's scheme
treats the secret as an element of the finite field $\FF_q$, where $q = 2^\ell$,
padded with $t$ uniformly random and independent elements from the same field.
The resulting vector over $\FF_q^{t+1}$ is then encoded using 
a Reed-Solomon
code of length $N$, providing $N$ shares of length $\ell$ bits each.
The fact that a Reed-Solomon code is Maximum Distance Separable (MDS) can then be used to show that 
the threshold guarantee for privacy and reconstruction is satisfied.

Remarkably, Shamir's scheme is optimal for threshold secret sharing in the
following sense: Any threshold secret sharing scheme  sharing $\ell$-bit 
secrets necessarily requires shares of length at least $\ell$, and Shamir's
scheme attains this lower bound \cite{ref:Sti92}.

It is natural to ask whether secret sharing is possible at
share lengths below the secret length $\log q<\ell$, where $\log$ is to base $2$ throughout this work. 
Of course, in this case, the threshold guarantee  that requires all subsets of participants be either authorized, or unauthorized, can no longer
be attained. 
Instead, the notion can be relaxed to \emph{ramp}
secret sharing which  allows some subset of participants to learn some information about the secret. 
A ramp scheme is defined with respect to two threshold
parameters, $t$ and $r > t+1$. As in threshold scheme, the knowledge 
of any $t$ shares or fewer does not reveal any information about the secret.
On the other hand, any $r$ shares can be used to reconstruct the secret. The subsets of size between $t+1$ and $r-1$, may learn some information  about the secret. The information-theoretic bound (see e.g. \cite{OKT92}) now becomes 
\begin{equation}\label{eq: ITbound}
\ell\leq (r-t)\log q.
\end{equation}
Ideally, one would like to obtain equality in (\ref{eq: ITbound}) for as general parameter settings as possible. 

Let $g\colon=r-t$ denote the {\em gap} between the privacy and reconstructibility parameters. Let the secret length $\ell$ and the number of players $N$ be unconstrained integer parameters.
It is known that, using Reed-Solomon code interpretation of Shamir's approach applied to
a random linear code, for every fixed {\em relative gap} $\gamma\colon=g/N$,
there is a constant $q$ only depending on $\gamma$ such that
a ramp secret sharing scheme with share size $q$ exists.
Such schemes can actually be constructed by using explicit
algebraic geometry codes instead of random linear codes. 
In fact, this dependence of share size $q$ on relative gap $g/N$ is inherent for threshold and more generally ramp schemes. 
It is shown in an unpublished work of Kilian and Nisan
\footnote{Their result is unpublished, and is independently obtained and generalised by \cite{ref:CCX13} (see \cite[Appendix A]{ref:CCX13}).} 
for threshold schemes, and later more generally in \cite{ref:CCX13}, that for ramp schemes with share size $q$, threshold gap $g$, privacy threshold $t$ and unconstrained number of players $N$, the following always holds:
$
q\geq(N-t+1)/g.
$
Very recently in \cite{BGK16}, a new bound with respect to the reconstruction parameter $r$ is proved through connecting secret sharing for one bit secret to game theory:
$
q\geq(r+1)/g.
$
These two bounds together yield
\begin{equation}\label{eq: bound}
q\geq(N+g+2)/(2g). 
\end{equation}
Note that the bound (\ref{eq: bound}) is very different from the bound (\ref{eq: ITbound}) in nature. The bound (\ref{eq: ITbound}) is the fundamental limitation of information-theoretic security, bearing the same flavour as the One-Time-Pad. The bound (\ref{eq: bound}) is independent of the secret length and holds even when the secret is one bit. 

We ask the following question: {\em For a fixed constant share size $q$ (in particular, $q=2$), is it possible to construct (relaxed but equally effective) ramp secret sharing schemes with arbitrarily small relative gap $\gamma>0$ that asymptotically achieve equality in} (\ref{eq: ITbound})?

Our results in this work show that the restriction (\ref{eq: bound}) can be overcome if
we allow a negligible privacy error in statistical distance (semantic security) and
a negligible reconstruction error probability. 

\subsection{Related work}


Secret sharing with binary shares is recently selectively 
studied in \cite{BIVW16} with the focus on the tradeoff between the privacy parameter $t$ and the computational complexity of the reconstruction function. The concern of this line of works is to construct secret sharing schemes whose sharing and reconstructing algorithms are both in the complexity class AC$^0$ (i.e., constant depth circuits). 

The model of secret sharing considered in \cite{BIVW16} has perfect privacy, namely, the distributions of any $t$ shares from a pair of distinct secrets are identical, while reconstruction is not necessarily with probability $1$. In a followup work \cite{BW17}, the privacy is further relaxed to semantic security with an error parameter $\varepsilon>0$. The relaxation is shown to be useful in allowing more choices of $t$ while keeping the computational complexity of the reconstruction algorithm within AC$^0$.

In \cite{BIVW16,BW17}, the secret is one bit. In \cite{AC0}, secrets of length equal to a fraction of $N$ (number of players) is considered. This time binary secret sharing with adaptive and non-adaptive adversaries similar to the model we consider in this work is defined. However the paper considers only a privacy threshold $t$, and reconstruction is from the full share set ($r=N$ always). Their goal is to achieve large secrets $\ell=\Omega(N)$ over binary shares with large privacy parameter $t=\Omega(N)$, which is also similar to ours. They have an additional goal of keeping the computational complexity of the reconstruction algorithm within AC$^0$, which we do not consider in this work. 
Their large privacy parameter $t=\tau N$ is with a $\tau$ much smaller than $1$, which means that the relative threshold gap $\gamma=1-\tau$ can not be arbitrarily small. 


In the literature, perhaps the closest notion to secret sharing with binary shares is
that of \emph{wiretap codes}, first  studied in information theory. 
In the basic 
 wiretap channel model of Wyner and its extension to {\em broadcast channel with confidential messages} \cite{Wyner,CK78},
there is a point-to-point \emph{main} channel between a sender
and a receiver that has partial leakage to the adversary, and the leakage of information 
is modelled by a second 
point-to-point \emph{wiretapper} 
channel between the sender and the adversary. To goal of the sender  is
to encode messages 
 in such a way that the receiver can decode them, while 
 the adversary does not learn much about them
\cite{Wyner}. 
The highest information rate achievable for a wiretap channel is called the {\em secrecy capacity}. 
A Binary Erasure Channel with erasure probability $p$ ($\mathsf{BEC}_p$) is a probabilistic transformation that maps inputs $0$ and $1$ to a symbol $?$, which denotes erasure, with probability $p$ and to the inputs themselves with probability $1-p$. The erasure channel scenario of wiretap model is closely related to secret sharing with fixed share size.
For a pair ($\mathsf{BEC}_{p_m}$,$\mathsf{BEC}_{p_w}$) of BEC's, such that $p_m<p_w$, it is known that the secrecy capacity is the difference of the respective channel capacities: $(1-p_m)-(1-p_w)=p_w-p_m$. 

It is important to note the distinctions between the
erasure wiretap model above, 
and binary secret sharing.
First, 
the guarantees
of wiretap codes are required to only hold for random 
messages, whereas in secret sharing, the cryptographic
convention of security for worst-case messages is required.
Second, in the standard wiretap model, the notion of secrecy
is information-theoretic and is typically measured
in terms of the mutual information. 
Namely, for a random $\ell$-bit message
$\mathsf{M}$, and letting $\mathsf{W}$ denote the information delivered to
the adversary,
secrecy is satisfied in the weak (resp., strong) sense if
$\mathsf{I}(\mathsf{M}; \mathsf{W}) \leq \eps \ell$ (resp., $\mathsf{I}(\mathsf{M};\mathsf{W}) \leq \eps$) 
for an arbitrarily small constant $\eps$.
The randomness of the random variable $\mathsf{W}$ depends on the
randomness of $\mathsf{M}$, the two channels, as well as the internal
randomness of the encoder.
For secret sharing, on the other hand, either perfect secrecy
or semantic secrecy (negligible leakage with respect to statistical distance)
is a requirement.

The notion of secrecy in wiretap codes has evolved over years. More recently the notion of semantic security for wiretap model has been introduced \cite{btv12}, which allows arbitrary  message distribution 
and is shown to be equivalent to negligible leakage with respect to statistical distance. 

There remains one last distinction between {\em semantically secure} wiretap model and secret sharing with fixed share size. 
That is the nature of the main and wiretapper channels
are typically stochastic (e.g., the erasure channel
with random i.i.d. erasures), whereas for secret sharing
a worst-case guarantee for the erasure patterns
is required. Namely, in secret sharing, reconstruction with overwhelming probability is required for every
choice of $r$ or more shares, 
and privacy of the secret is required
for every (adaptive or non-adaptive) choice of the 
$t$ shares observed by the adversary. 

On the other hand, Ozarow and Wyner proposed 
the {\em wiretap channel II}, where an adversary observes arbitrary $t$ out of the total $N$ bits of the communication \cite{OW84}. The wiretapper channel of the wiretap channel II is the adversarial analogue of a $\mathsf{BEC}_{p_w}$ with erasure probability ${p_w}=(N-t)/N$. This is exactly the same as the privacy requirement of the binary secret sharing. But in the wiretap channel II model, the main channel is a clear channel. This is corresponding to the special case of binary secret sharing where the reconstruction is only required when all shares are available. The adversarial analogue of the erasure scenario Wyner wiretap channel should have a main channel that erases $Np_m$ components out of the total $N$ components, which is the same as the reconstruction requirement of binary secret sharing.
Moreover, same as the Wyner wiretap channel model, the secrecy of the wiretap channel II is only required for uniform message and satisfies the weak (resp., strong) secrecy mentioned above.

\subsection{Our contributions} 

We motivate the study of secret sharing scheme with 
fixed share size $q$, and study the extremal case of binary shares.
Our goal is to show that even in this extremely restrictive case, 
a slight relaxation of the privacy and reconstruction notions of ramp 
 secret sharing 
 guarantees  explicit construction of families of ramp schemes 
 \footnote{We abuse the notion and will continue calling these relaxed schemes ramp schemes.}
 with any constant relative privacy and reconstruction thresholds $0 \leq \tau < \rho \leq 1$, in particular, the relative threshold gap $\gamma=\rho-\tau$ can be an arbitrarily small constant.
Namely, for any constants
$0 \leq \tau < \rho \leq 1$, it can 
be guaranteed that
any $\tau N$ or fewer shares reveal essentially no information
about the secret, whereas any $\rho N$ or more shares can
reconstruct the exact secret with a negligible failure probability. 
While we only focus on the extremal special case $q=2$ in this presentation, all our results can be extended to any constant $q$
(see Section~\ref{sec: conclusion}).

We consider binary sharing of a large $\ell$-bit secret 
and for this work focus on the asymptotic case where 
the secret length $\ell$, and consequently the number of players $N$,
are sufficiently large. 
We replace perfect privacy with 
 semantic security,  the strongest cryptographic notion of privacy second only to perfect privacy.
That is, for any two 
secrets (possibly chosen by the adversary), we require the
 adversary's view 
 to be statistically indistinguishable. The view of the adversary is a random variable with randomness coming solely from the internal
randomness of the sharing algorithm. 
The notion of indistinguishability that we use
is statistical (total variation) distance bounded
by a leakage error parameter $\eps$ that is negligible in $N$.
Using  non-perfect privacy creates a
distinction between non-adaptive and adaptive
secrecy. A non-adaptive adversary 
chooses any $\tau$ fraction of the $N$ players at once, 
and receives their corresponding shares. 
An adaptive adversary however, selects share holders one by one, receives their shares and uses its   available information to make its next choice.
When $\eps=0$, i.e., when perfect privacy holds, non-adaptive
secrecy automatically implies adaptive secrecy as well.
However, this is not necessarily the case when $\eps>0$ 
and we thus study the two cases separately. Similarly, we replace the perfect reconstruction with probabilistic reconstruction allowing a failure probability $\delta$ that is negligible in $N$. The special case of $\delta=0$ means perfect reconstruction.

Note that secret sharing with fixed share size
necessarily imposes certain restrictions that are not
common in standard secret sharing. 
Unlike secret sharing with share length dependent on the secret length (for threshold schemes) or secret length and threshold gap (for ramp schemes), 
binary sharing of an $\ell$-bit secret obviously requires
at least $\ell$ shares to accommodate 
the secret information. 
 For a family of ramp secret sharing schemes with fixed share size $q$ and fixed relative thresholds $0 \leq \tau < \rho \leq 1$, as $N$ grows the 
 absolute gap length $(\rho-\tau)N$ grows, and the accommadatable length of the secret is expected to grow and so
 the 
 ratio $\ell/N \in (0,1]$ 
becomes a key parameter of interest for the family, referred to as the {\em coding rate}. As is customary 
in coding theory, it is desired to characterize
the maximum possible ratio $\ell/N \in (0,1]$ for binary secret sharing. 
We use the relation (a similar relation was used in \cite{nearoptimalwt} for robust secret sharing) between a binary secret sharing family with 
relative threshold $(\tau,\rho)$ and codes for a Wyner wiretap channel
with two BEC's to derive a coding rate upper
bound of $\rho-\tau$ for binary secret sharing (see Lemma~\ref{lem: upperbound}).

Our main technical contributions are explicit constructions
of binary secret sharing schemes in both the non-adaptive
and adaptive models, and proving optimality of non-adaptive construction. Namely, we prove the following:

\begin{theorem}[informal summary of Lemma~\ref{lem: upperbound},
Corollary~\ref{cor: non-adaptive}, and Corollary~\ref{cor: adaptive}]
For any choice of $0 \leq \tau < \rho \leq 1$,
and large enough $N$, there is an explicit construction of a
binary secret sharing scheme 
with $N$ players that provides (adaptive or non-adaptive)
privacy against leakage of any $\tau N$ or fewer shares,
as well as reconstruction from any $\rho N$ or more of the shares
(achieving semantic secrecy with negligible error and imperfect reconstruction with
negligible failure probability).
For non-adaptive secrecy, the scheme shares a secret of length
$\ell = (\rho-\tau-o(1))N$, which is asymptotically
optimal. For adaptive secrecy, the scheme shares a secret of length
$\ell =\Omega((\rho-\tau)N)$. 
\end{theorem}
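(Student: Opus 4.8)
The statement packages three results, and the plan is to establish each in turn.

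\emph{The upper bound} $\ell\le(\rho-\tau+o(1))N$ (Lemma~\ref{lem: upperbound}) I would obtain by a reduction to the converse for the Wyner wiretap channel. Given any binary secret sharing scheme with relative thresholds $(\tau,\rho)$, semantic secrecy with negligible error, and negligible reconstruction failure, and given any constant $\eps>0$, view the same encoder as a code for the channel pair $(\mathsf{BEC}_{1-\rho-\eps},\mathsf{BEC}_{1-\tau+\eps})$. By a Chernoff bound, with probability $1-2^{-\Omega(N)}$ the main $\mathsf{BEC}$ leaves at least $\rho N$ unerased coordinates, so the legitimate decoder sees an admissible reconstruction set and succeeds up to negligible error, while the wiretapper $\mathsf{BEC}$ leaves at most $\tau N$ unerased coordinates, so the adversary's view falls under the privacy guarantee; since semantic secrecy for worst-case secrets implies strong, mutual-information secrecy for the uniform message, this is a legitimate wiretap code. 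The classical converse then bounds its rate by the secrecy capacity $\bigl(1-(1-\rho-\eps)\bigr)-\bigl(1-(1-\tau+\eps)\bigr)=\rho-\tau+2\eps$; letting $\eps\to 0$ after $N\to\infty$ gives $\ell/N\le\rho-\tau+o(1)$.

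\emph{The non-adaptive construction} (Corollary~\ref{cor: non-adaptive}) I would build modularly. The starting point is an explicit, capacity-achieving, semantically secure code for a pair of binary erasure channels: such a code maps $(p_w-p_m-o(1))N$ bits to $N$ bits, decodes under i.i.d.\ $\mathsf{BEC}_{p_m}$ erasures with negligible error, and leaks negligibly in statistical distance under i.i.d.\ $\mathsf{BEC}_{p_w}$ erasures; explicit families come from the seeded-extractor / invertible-extractor and polarization toolboxes. The substance of the construction is then to convert worst-case erasure and observation patterns into effectively random ones: the sharing algorithm draws a fresh internal ``scramble'' of the $N$ coordinates so that any worst-case missing set of $(1-\rho)N$ coordinates, resp.\ any worst-case observed set of $\tau N$ coordinates, becomes a set whose distribution is close enough to uniform for the stochastic decoding, resp.\ secrecy, guarantee of the underlying code to apply. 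This per-set, fresh-randomness argument is precisely what lets a negligible (rather than exponentially small) error bound suffice, and hence keeps the rate loss at $o(1)$ rather than incurring the constant loss of a naive union bound over all $2^{O(N)}$ coordinate sets. The scramble is withheld from the adversary (it is part of the sharing randomness, or is itself shared by a short inner scheme), which preserves semantic secrecy, and the reconstructor recovers it either from a heavily redundant encoding of its short description, or by trial with a built-in consistency check on the decoder. Folding the scramble's overhead and the Chernoff gap between worst-case and typical erasure fractions into the $o(1)$ term yields a secret of length $(\rho-\tau-o(1))N$, matching the upper bound.

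\emph{The adaptive construction} (Corollary~\ref{cor: adaptive}) requires a separate privacy analysis, since an adaptive adversary's observed set is not fixed in advance and the non-adaptive argument --- which fixes the set before bounding leakage --- does not apply. I would replace the privacy gadget by one resilient to adaptive probing, either by union-bounding the statistical guarantee over the (exponentially many) transcripts of the adversary's decision tree, or by invoking an adaptively-leakage-resilient building block; either route costs a constant factor in the secret length, giving $\ell=\Omega((\rho-\tau)N)$.

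\emph{Main obstacle.} The crux throughout is the worst-case-to-random reduction for reconstruction: the scramble must be simultaneously hidden from the adversary (for secrecy), recoverable by the legitimate decoder over an adversarially chosen erasure pattern, and cheap enough --- in description length and in the worst-case-versus-typical slack it introduces --- to cost only $o(1)$ in rate; and in the adaptive setting it must additionally resist an adversary that adapts to whatever part of the scramble it has implicitly learned from earlier responses. Making all of these hold at once with an \emph{explicit} scramble is the technical heart of the theorem, and is also the reason the adaptive bound is only tight up to a constant.
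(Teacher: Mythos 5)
Your converse argument (Lemma~\ref{lem: upperbound}) is essentially the paper's: view the SSS as a wiretap code for $(\mathsf{BEC}_{1-\rho-\xi},\mathsf{BEC}_{1-\tau+\xi})$, use Chernoff concentration to relate worst-case erasure/observation fractions to typical ones, and invoke the BEC-wiretap secrecy capacity. That part is sound.

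The constructions diverge from the paper and the non-adaptive one has a real gap. You posit, as a black box, an ``explicit, capacity-achieving, semantically secure code for a pair of binary erasure channels'' and reduce worst-case to i.i.d.\ patterns by a hidden random scramble, applied \emph{both} for reconstruction and for privacy. But an explicit capacity-achieving BEC wiretap code that is semantically secure --- worst-case messages, statistical distance, not merely uniform-message equivocation --- is not an off-the-shelf primitive; building it is precisely the paper's technical content. The paper inverts a linear strong seeded extractor to form the pre-encoding $(\mathsf{Z},\mathsf{Ext}^{-1}(\mathsf{Z},\mathsf{s}))$ and composes with a stochastic affine erasure code; the key Lemma~\ref{th: extractor property} shows that for any fixed affine map $f_A$ (in particular, projection onto any worst-case observed set composed with the affine ECC encoder), the distribution $f_A(\mathsf{Z},\mathsf{X})$ on a uniform preimage of $\mathsf{s}$ is $8\varepsilon$-close to an $\mathsf{s}$-independent distribution. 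So worst-case observation sets are handled \emph{directly} by linearity of the extractor, and the Guruswami--Smith-style scramble is confined to the SA-ECC, i.e.\ to the reconstruction side. Using a scramble for privacy, as you propose, runs into a chicken-and-egg issue you gesture at but do not resolve: the scramble must be encoded in the shares so the decoder can recover it under an adversarial erasure pattern, but then the adversary --- who \emph{chooses} which $\tau N$ shares to read --- can target those positions, learn (part of) the scramble, and its effective observation set of the underlying codeword is then no longer a random subset, so the i.i.d.\ secrecy guarantee does not apply. This is a genuinely different obstruction from the one GS10 solve (they only need to protect the control information against an error-injecting adversary), and nothing in your sketch addresses it.

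For the adaptive case, neither of your two suggested routes matches the paper nor clearly works. A union bound over the $\exp(\Theta(N))$ adversary transcripts would require per-fixing privacy error $2^{-\Omega(N)}$; a seeded extractor with seed length $O(\log^3(n/\varepsilon))$ gives negligible but not exponentially small error unless you blow up the seed length, destroying the rate. The paper instead switches to a \emph{seedless} affine extractor, strengthens the extraction guarantee to $\ell_\infty$-closeness (``almost-perfect''), and proves adaptive privacy by a direct Bayes-rule computation showing $\Pr[\mathsf{View}_{\mathcal{A}}=\mathsf{w}]=(1\pm\tfrac{\varepsilon}{2})\,2^{-\mathrm{rank}(\mathcal{A})}$ for every secret, with no union bound at all. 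The constant-factor loss in $\Omega((\rho-\tau)N)$ then comes from Bourgain-type affine extractors extracting only a constant fraction of the source entropy, together with the one-time-pad inversion and truncation used to get the $\ell_\infty$ guarantee --- not from a union-bound overhead as your sketch suggests.
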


As a side contribution, our findings unify the Wyner wiretap model and its adversarial analogue. 
Our capacity-achieving construction of binary secret sharing for non-adaptive adversaries implies that the secrecy capacity of the adversarial analogue of the erasure scenario Wyner wiretap channel is similarly characterized by the erasure ratios of the two channels. Moreover, the secrecy can be strengthened to semantic security.

This answers an open question posted in \cite{ALCP09}. The authors studied a generalisation of the wiretap II model, where the adversary chooses $t$ bits to observe and erases them. They showed that the rate $1-\tau-h_2(\tau)$, where $h_2(\cdot)$ is the binary entropy function, 
can be achieved and left open the question of whether a higher rate is achievable. Our result specialized to their setting shows that, the rate $1-2\tau$ can be explicitly achieved.

\subsection{Our approach and techniques}

Our explicit constructions follow the paradigm of invertible
randomness extractors formalized in~\cite{CDS12}.
Invertible extractors were used in \cite{CDS12}
for explicit construction of optimal wiretap coding schemes
in the Wiretap channel II \cite{OW84}.
This, in particular, is corresponding to the $\rho=1$ special case of 
secret sharing 
where reconstruction is only required when
all shares are available. Moreover, the secrecy there is an information-theoretic notion, and only required to hold for uniform messages. The consequence of the latter is that the construction in \cite{CDS12} does not directly give us binary secret sharing, not even for the $\rho=1$ special case. The exposition below is first focused on how semantic security is achieved.  

As in \cite{CDS12}, we rely on
{\em invertible affine extractors} as our primary 
technical tool. 
Such an extractor is
an explicit function $\mathsf{AExt}\colon \zo^n \to \zo^\ell$ such
that, for any random variable $\mathsf{X}$ uniformly distributed
over an unknown $k$-dimensional affine subspace of $\FF_2^n$,
the distribution of $\mathsf{AExt}(\mathsf{X})$ is close to the uniform distribution over $\FF_2^\ell$
in statistical distance. Furthermore, the invertibility guarantee
provides an efficient algorithm for 
sampling a uniform element from the set $\mathsf{AExt}^{-1}(\mathsf{s})$ of pre-images 
for any given output $\mathsf{s}\in\FF_2^\ell$. 

It is then natural to consider the affine extractor's uniform inverter 
as a candidate building block for the sharing algorithm of a secret sharing scheme. 
Intuitively, if the secret $\mathsf{s}$ is chosen
uniformly at random, we have the guarantee that
for any choice of a bounded number of
the bits of its random pre-image revealed to the adversary, the distribution of the random pre-image conditioned on the revealed value satisfies that of an affine source. 
Now according to the definition of an affine extractor,
the extractor's
output (i.e., the secret $\mathsf{s}$) remains uniform
(and thus unaffected in distribution)
given the information revealed to the adversary. 
Consequently, secrecy should at least hold in an information-theoretic
sense, i.e. the mutual information between the secret and the revealed vector components is zero. This is what was formalized and used in~\cite{CDS12} for the construction
of Wiretap channel II codes.

For non-adaptive adversaries, in fact it is possible
to use \emph{invertible seeded extractors} rather than 
invertible affine extractors described in the above construction. 
A (strong) seeded extractor
assumes, in addition to the main input,
an independent seed as an auxiliary input and 
ensures uniformity of the output for most fixings
of the seed. The secret sharing encoder appends a 
randomly chosen seed to the encoding and inverts the extractor
with respect to the chosen seed. Then, the above
argument would still hold even if the seed is completely
revealed to the adversary.

The interest in the use
of seeded, as opposed to seedless affine, extractors
is twofold. First, nearly optimal and very efficient
constructions of seeded extractors are known in the
literature that extract nearly the entire source entropy
with only a short seed. This allows us to attain
nearly optimal rates for the non-adaptive case. 
Furthermore, and crucially, such nearly optimal extractor constructions
(in particular, Trevisan's extractor \cite{Trevisan,improvement})
can in fact be linear functions for every fixed choice
of the seed (in contrast, seedless affine extractors
can never be linear functions). We take advantage of the linearity of the
extractor in a crucial way and use a rather
delicate analysis to show that in fact the linearity
of the extractor can be utilized to prove that the
resulting secret sharing scheme provides the stringent
worst-case secret guarantee which is a key requirement
distinguishing secret sharing schemes (a cryptographic primitive)
from wiretap codes (an information-theoretic notion).


Using a seeded extractor instead of a seedless extractor, however, introduces a new challenge. In order for the seeded extractor to work, the seed has to be independent of the main input, which is a distribution induced by the adversary's choice of reading positions. The independence of the seed and the main input can be directly argued when the adversary is non-adaptive. An adaptive adversary, however, may choose its reading positions to learn about the seed first, and then choose the rest of the reading positions according the value of the seed. In this case, we can not prove the independence of the seed and the main input. 

For adaptive adversaries, we go back to using an invertible affine extractor.
We prove that both security
for worst-case messages and against adaptive 
adversaries are guaranteed if the affine extractor 
provides the strong guarantee of having a
nearly uniform output with respect to the $\ell_\infty$ 
measure rather than $\ell_1$. 
However, this comes at the  cost of the extractor not being able
to extract the entire  entropy of the source, leading to ramp
secret sharing schemes with slightly sub-optimal rates,
albeit still achieving rates within a constant
factor of the optimum. 
As a proof of concept, we utilize
a simple padding and truncation technique
to convert any off-the-shelf seedless affine extractor
(such as those of Bourgain \cite{Bourgain} or Li \cite{XinLiAffine})
to one that satisfies the stronger uniformity condition
that we require. 

We now turn to reconstruction from an incomplete set of shares.
In order to provide reconstructibility from a subset
of size $r$ of the shares, we naturally compose the
encoding obtained from the extractor's inversion routine
with a linear erasure-correcting code. The linearity of
the code ensures that the extractor's input
subject to the adversary's observation (which now can
consist of linear combinations of the original encoding) remains
uniform on some affine space, thus preserving the
privacy guarantee. 

However, since by the known rate-distance trade-offs 
of binary error-correcting codes, no deterministic
coding scheme can correct more than a $1/2$ fraction of erasures
(a constraint that would limit the choice of $\rho$),
the relaxed notion of \emph{stochastic coding schemes}
is necessary for us to allow reconstruction for all choices of
$\rho \in (\tau,1]$.
 Intuitively, a stochastic code is a randomized encoder with a deterministic decoder, that allows the required fraction of errors to be corrected.
We utilize what we call a \emph{stochastic affine} code.
Such codes are equipped with encoders that are affine
functions of the message for every fixing of the encoder's
internal randomness. We show that such codes are
as suitable as deterministic linear codes for providing
the linearity properties that our construction needs.

In fact, we need capacity-achieving stochastic erasure 
codes, i.e., those that correct every $1-\rho$ fraction
of erasures at asymptotic rate $\rho$, to be able to construct binary secret sharing schemes with arbitrarily small relative gap $\gamma=\rho-\tau$.
To construct capacity-achieving stochastic affine erasure 
codes, we utilize
a construction of stochastic codes due to Guruswami
and Smith \cite{GS10} 
for  bit-flip errors.  We observe
that this construction can be modified 
to yield capacity-achieving
erasure codes. 
 Roughly speaking, this is
achieved by taking an explicit capacity-achieving linear code for BEC and pseudo-randomly
shuffling the codeword positions. Combined with a delicate
encoding of hidden ``control information'' to communicate
the choice of the permutation to the decoder in a robust
manner, the construction transforms robustness against random
erasures to worst-case erasures at the cost of making the
encoder randomized.


\subsection{Organization of the paper}
Section \ref{sec: pre} contains a brief introduction to the two building blocks for our constructions: randomness extractors and stochastic codes. In Section \ref{sec: basic SSS}, we formally define the binary secret sharing model and prove a coding rate upper bound. 
Section \ref{sec: capacity} contains a capacity-achieving construction with privacy against non-adaptive adversaries.
Section \ref{sec: construction} contains a constant rate construction with privacy against adaptive adversaries. Finally, we conclude the paper and discuss open problems in Section \ref{sec: conclusion}.

\section{Preliminaries and definitions} \label{sec: pre} 
In this section, we review the necessary facts and results about randomness extractors, both the seeded and seedless affine variants, as well as the stochastic erasure correcting codes.

Randomness extractors extract close to uniform bits from input sequences that 
are not uniform but have some guaranteed entropy. 
The closeness to uniform of the extractor output is measured by the statistical distance (half the $\ell_1$-norm). For a set $\mathcal{X}$, we use $\mathsf{X}\leftarrow\mathcal{X}$ to denote that $\mathsf{X}$ is distributed over the set $\mathcal{X}$. For two random variables $\mathsf{X},\mathsf{Y}\leftarrow\mathcal{X}$, the statistical distance between $\mathsf{X}$ and $\mathsf{Y}$ 
is defined as,
\begin{align*}
\mathsf{SD}(\mathsf{X};\mathsf{Y})= \dfrac{1}{2}\sum_{\mathsf{x} \in \mathcal{X}}\left |\mathsf{Pr}[\mathsf{X}=\mathsf{x}]-\mathsf{Pr}[\mathsf{Y}=\mathsf{x}]\right |.
\end{align*}
We say $\mathsf{X}$ and $\mathsf{Y}$ are $\varepsilon$-close 
if $\mathsf{SD}(\mathsf{X},\mathsf{Y})\leq \varepsilon$.
A {\em randomness source} is a random variable with  lower bound on its min-entropy, which is defined by $\mathsf{H}_\infty(\mathsf{X})=-\log \max_\mathsf{x}\{\mathsf{Pr}[\mathsf{X}=\mathsf{x}]\}$.
We say  a  random variable $\mathsf{X} \leftarrow \lbrace 0,1\rbrace^{n}$ is a {\em $(n,k)$-source}  if $\mathsf{H}_\infty(\mathsf{X})\geq k$. 

For well structured sources, there exist deterministic functions that can extract close to uniform bits. The {\em support} of $\mathsf{X}\leftarrow\mathcal{X}$ is the set of $\mathsf{x}\in\mathcal{X}$ such that $\mathsf{Pr}[\mathsf{X}=\mathsf{x}]>0$.
An  \textit{affine} $(n,k)$-source is an $(n,k)$-source whose support is an affine 
sub-space of $\lbrace0,1\rbrace^n$ and each vector in the support occurs with the same probability. Let $\mathsf{U}_m$ denote the random variable uniformly distributed over $\{0,1\}^m$.

\begin{definition}\label{def: AExt} A function $\mathsf{AExt}\colon\lbrace0,1\rbrace^n \to \lbrace 0,1 \rbrace ^m$ is an affine $(k, \varepsilon)$-extractor if for any affine $(n,k)$-source $\mathsf{X}$, we have 
\begin{align*}
\mathsf{SD}(\mathsf{AExt}(\mathsf{X});\mathsf{U}_m) \leq \varepsilon.
\end{align*} 
\end{definition}
An affine extractor can not be a linear function.


For general $(n,k)$-sources, there does not exist a deterministic function that can extract close to uniform bits from all of them simultaneously. A family of deterministic functions are needed.

\begin{definition}\label{def: strong extractor}
A function $\mathsf{Ext}\colon\lbrace0,1\rbrace^d\times \lbrace0,1\rbrace^n  \to \lbrace 0,1 \rbrace ^m$ is a strong seeded $(k, \varepsilon)$-extractor if for any $(n,k)$-source $\mathsf{X}$, we have 
\begin{align*}
\mathsf{SD}(\mathsf{S},\mathsf{Ext}(\mathsf{S},\mathsf{X});\mathsf{S},\mathsf{U}_m) \leq \varepsilon,
\end{align*} 
where $\mathsf{S}$ is chosen uniformly from $\lbrace0,1\rbrace^d$. 
A seeded extractor $\mathsf{Ext}(\cdot,\cdot)$ is called linear if for any fixed seed $\mathsf{S}=\mathsf{s}$, the function $\mathsf{Ext}(\mathsf{s},\cdot)$ is a linear function.
\end{definition}

We will use Trevisan's extractor \cite{Trevisan} in our first construction. In particular, we use the following improvement of this extractor due to Raz, Reingold and Vadhan \cite{improvement}.


\begin{lemma}[\cite{improvement}] \label{lem: Ext}
There is an explicit linear strong seeded $(k,\varepsilon)$-extractor $\mathsf{Ext}\colon\lbrace0,1\rbrace^d\times \lbrace0,1\rbrace^n  \to \lbrace 0,1 \rbrace ^{\ell}$ with $d=O(\log^3(n/\varepsilon))$ and $\ell=k-O(d)$.
\end{lemma}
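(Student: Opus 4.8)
The statement to prove is Lemma~\ref{lem: Ext}, the existence of an explicit linear strong seeded extractor with the stated parameters. Here is how I would approach it.

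\textbf{Overall strategy.} The plan is to invoke Trevisan's extractor construction \cite{Trevisan} together with the Raz--Reingold--Vadhan improvement \cite{improvement}, and verify that the specific instantiation has (i) linearity for every fixed seed, (ii) seed length $d = O(\log^3(n/\varepsilon))$, and (iii) output length $\ell = k - O(d)$. Since the lemma is quoted verbatim from \cite{improvement}, the ``proof'' is really an exposition of why their construction yields these three properties simultaneously. First I would recall the Nisan--Wigderson-style architecture of Trevisan's extractor: one encodes the source $\mathsf{x} \in \zo^n$ by a binary code $\overline{\mathsf{x}} = C(\mathsf{x}) \in \zo^{\bar n}$ with good list-decoding (equivalently, small-bias / Reed--Muller-type) properties, and then the output bits are the bits of $\overline{\mathsf{x}}$ read at positions determined by a combinatorial design $S_1, \dots, S_\ell \subseteq [d]$ applied to the seed. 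The key observation for linearity is that if the code $C$ is \emph{linear}, then each output bit $\overline{\mathsf{x}}[f(\mathsf{s})]$ is a fixed $\FF_2$-linear functional of $\mathsf{x}$ once the seed $\mathsf{s}$ is fixed, so the whole map $\mathsf{x} \mapsto \ext(\mathsf{s},\mathsf{x})$ is $\FF_2$-linear. Thus I would stress that one must use a linear code (e.g. a Reed--Muller code, or any explicit linear code with the list-decoding radius needed by the reconstruction argument), which is exactly what the standard construction does.

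\textbf{Key steps in order.} (1) Fix the error $\varepsilon$ and choose the design parameters: a weak design (in the RRV sense) $S_1, \dots, S_\ell \subseteq [d]$ with $|S_i| = \log \bar n$ and pairwise intersections bounded, which the RRV construction achieves with $d = O\!\big(\frac{(\log \bar n)^2}{\log \ell}\log \ell\big)$-type bounds; plugging in $\bar n = \mathrm{poly}(n/\varepsilon)$ gives $d = O(\log^3(n/\varepsilon))$. (2) Take $C$ to be an explicit linear list-decodable code of block length $\bar n = \mathrm{poly}(n/\varepsilon)$, decoding from radius $1/2 - \varepsilon'$ with list size $\mathrm{poly}(1/\varepsilon')$, where $\varepsilon'$ is polynomially related to $\varepsilon/\ell$; such codes exist explicitly (concatenated Reed--Solomon/Hadamard, or Reed--Muller). (3) Define $\ext(\mathsf{s},\mathsf{x})$ bit-by-bit as $\overline{\mathsf{x}}$ evaluated at the $\ell$ addresses obtained by restricting $\mathsf{s}$ to $S_1,\dots,S_\ell$; observe linearity in $\mathsf{x}$ for fixed $\mathsf{s}$ from step (2). (4) Run the standard reconstruction/hybrid argument: if $\ext$ fails to be $\varepsilon$-close to uniform on some $(n,k)$-source (even in the strong sense against the seed), then for a noticeable fraction of bad inputs $\mathsf{x}$ there is a short ``advice'' string — a few output bits at nearby designs plus a seed prefix — from which one can predict a bit of $C(\mathsf{x})$ with advantage, hence list-decode $\mathsf{x}$; counting the total advice length shows the number of such $\mathsf{x}$ is at most $2^{k - \Omega(d)}$ roughly, contradicting the min-entropy bound unless $\ell = k - O(d)$. (5) Collect the constraints: the advice budget forces $\ell \le k - O(d)$, and the design/code choices give the claimed $d$.

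\textbf{Main obstacle.} The technically delicate point — and the one genuinely due to \cite{improvement} rather than \cite{Trevisan} — is the use of \emph{weak designs} to drive the seed length down to $O(\log^3(n/\varepsilon))$ while still extracting $\ell = k - O(d)$ bits, i.e. essentially all the entropy. In Trevisan's original analysis the ``combinatorial design'' requirement costs an extra factor that would make $\ell$ bounded away from $k$; RRV relax the design condition (only bounding $\sum_i 2^{|S_i \cap S_j|}$ rather than each intersection) and carefully balance this against the reconstruction argument's advice length, which is where the $-O(d)$ loss (and only that loss) comes from. I would therefore devote the bulk of the argument to stating the weak-design guarantee precisely, showing the reconstruction advice has length $\le \sum_{j<i} 2^{|S_i\cap S_j|} + O(d) = O(d)$ per output bit region, and confirming that this is what yields $\ell = k - O(d)$ with the stated seed length. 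The linearity claim itself is then essentially a remark: nothing in the weak-design reconstruction argument is disturbed by insisting $C$ be linear, and with a linear $C$ each output coordinate is manifestly an $\FF_2$-linear form in $\mathsf{x}$ for every fixed seed.
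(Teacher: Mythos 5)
The paper does not prove Lemma~\ref{lem: Ext}; it cites the result directly from \cite{improvement} (Raz--Reingold--Vadhan) and uses it as a black box. There is therefore no ``paper proof'' to compare against, and your task was effectively to reconstruct why the RRV version of Trevisan's extractor can be taken to be linear with the stated seed and output lengths.

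Your sketch captures the right ideas and the right division of labour: the Nisan--Wigderson architecture with output bit $i$ equal to $C(\mathsf{x})$ read at the address $\mathsf{s}|_{S_i}$; the observation that linearity of the encoding $C$ immediately makes $\ext(\mathsf{s},\cdot)$ an $\FF_2$-linear map for each fixed $\mathsf{s}$, and that the reconstruction/list-decoding argument is indifferent to this choice; and the fact that the $\ell = k - O(d)$ entropy loss with seed length $\mathrm{polylog}(n/\varepsilon)$ is specifically the contribution of \cite{improvement} via weak designs (bounding $\sum_{j<i} 2^{|S_i\cap S_j|}$ rather than each intersection). These are exactly the three facts needed to justify the lemma as quoted. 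One small wrinkle: your intermediate expression $d = O\!\big(\tfrac{(\log\bar n)^2}{\log\ell}\log\ell\big)$ simplifies to $O(\log^2\bar n)$, which does not match the ``gives $O(\log^3(n/\varepsilon))$'' you write in the next clause; the arithmetic in RRV is a bit more delicate (there is a genuine tradeoff between how close $\ell$ is to $k$ and the seed length, and getting $\ell = k - O(d)$ costs an extra logarithmic factor), but in any case the resulting $d$ is within the $O(\log^3(n/\varepsilon))$ budget the lemma claims, so the conclusion stands. Since the lemma is stated as a citation, the paper would accept exactly this kind of exposition as a justification rather than a from-scratch proof.
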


We will use Bourgain's affine extractor in our second construction. We note,
however, that we could have used other explicit extractors for this purpose, such
as \cite{XinLiAffine}.
\begin{lemma}[\cite{Bourgain}]\label{lem: affine} 
For every constant $0<\mu\leq 1$, there is an explicit affine $(\mu n,\varepsilon)$-extractor $\mathsf{AExt}\colon \{0,1\}^n\rightarrow\{0,1\}^m$ with output length $m=\Omega(n)$ and error $\varepsilon=2^{-\Omega(n)}$.
\end{lemma}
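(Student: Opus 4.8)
The plan is to follow Bourgain's route: realize the extractor as an explicit low-degree algebraic map composed with a field trace, and reduce its correctness to an exponential-sum estimate over affine subspaces that is ultimately powered by the sum--product phenomenon in finite fields. Concretely, fix $\mu$, write $n=s\ell$ for a suitable constant $s=s(\mu)$, and identify $\zo^n$ with $\FF_q^s$ where $q=2^{\ell}$ (so $q=2^{\Omega(n)}$); an affine $(n,\mu n)$-source $\mathsf{X}$ then becomes a uniformly random point of an affine subspace $V\subseteq\FF_q^s$ of $\FF_2$-dimension at least $\mu n$. The candidate extractor is $\mathsf{AExt}(x)=\mathrm{Tr}_{\FF_q/\FF_{2^m}}\!\big(g(x)\big)$ for an output length $m=\Omega(\ell)=\Omega(n)$ dividing $\ell$ and a carefully chosen low-degree rational map $g\colon\FF_q^s\to\FF_q$ (for instance a normalized sum of inverses of affine forms in the coordinate blocks). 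Two properties of $g$ are essential and non-obvious: $g$ must not be affine, and --- far more delicately --- its restriction to \emph{every} affine subspace of dimension $\ge\mu n$ must retain genuine additive--multiplicative structure; purely bilinear choices provably fail, because an affine subspace can force such a form to be linear along a line and hence far from equidistributed.

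Next I would move to the Fourier side. By Vazirani's XOR lemma it suffices to bound, for every nonzero $\lambda\in\FF_{2^m}$, the bias $\big|\mathbb{E}_{x\leftarrow V}[\chi_\lambda(\mathsf{AExt}(x))]\big|$, and unwinding the trace turns this into an additive character sum $\big|\sum_{x\in V}\Psi(\theta\,g(x))\big|$ over $V$, for a nontrivial additive character $\Psi$ of $\FF_q$ and a nonzero $\theta\in\FF_q$. If every such sum is at most $|V|\cdot 2^{-cn}$, then $\sd(\mathsf{AExt}(\mathsf{X});\mathsf{U}_m)\le\tfrac12\,2^{m/2}\,2^{-cn}$, which is $2^{-\Omega(n)}$ provided $m<2cn$. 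This is precisely why the character-sum estimate must beat the trivial bound $|V|$ by an \emph{exponential}, not merely polynomial, factor, and it is the crux of the whole argument.

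Obtaining that exponential cancellation uniformly over all large-dimensional affine subspaces is the main obstacle, and is where Bourgain's additive-combinatorial machinery is needed. The rough plan there is: (i) a structural step showing that, because $\dim V\ge\mu n$, projecting onto and slicing $V$ along a suitable grouping of the $s$ coordinate blocks exposes a Cartesian-product-like subset $A_1\times\cdots\times A_r$ (or one with controlled sumset/product-set doubling) on which $g$ acts as a genuine sum of products/ratios of large sets; and (ii) invoking the sum--product theorem over $\FF_q$ (Bourgain--Katz--Tao, Bourgain--Glibichuk--Konyagin) and the associated exponential-sum estimates to conclude $\big|\sum_{x\in V}\Psi(\theta g(x))\big|\le|V|\,q^{-\Omega(1)}=|V|\,2^{-\Omega(n)}$, uniformly in $\theta\ne0$ and in $V$. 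The delicate bookkeeping is to keep the hidden constants --- the $\Omega(1)$ gain from sum--product, hence $c$, hence the admissible $m$ --- bounded below by a function of $\mu$ alone, so that $m=\Omega(n)$ and $\varepsilon=2^{-\Omega(n)}$ hold for every fixed $\mu>0$. Explicitness is then routine: fixing an explicit irreducible polynomial for $\FF_q$, evaluating $g$, computing the trace and truncating are all $\mathrm{poly}(n)$-time, so $\mathsf{AExt}$ is explicit as claimed; alternatively one may plug in Li's affine extractor \cite{XinLiAffine}, whose analysis follows the same Fourier-analytic skeleton but supplies the sum--product input through a different combinatorial front end.
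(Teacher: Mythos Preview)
The paper does not prove this lemma; it is stated as a citation to Bourgain \cite{Bourgain} and used as a black box, so there is no ``paper's own proof'' to compare against. Your proposal goes well beyond what the paper does by attempting to sketch the internals of Bourgain's construction.

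As a sketch of Bourgain's argument, your outline has the right architecture: identify $\zo^n$ with $\FF_q^s$, define the extractor via a low-degree (non-affine) map followed by a trace, reduce closeness-to-uniform to character-sum bounds via the XOR lemma, and obtain exponential cancellation from sum--product estimates. That is indeed the skeleton of \cite{Bourgain} (and of \cite{XinLiAffine}). That said, what you have written is a plan, not a proof: the structural step (i)---passing from an arbitrary affine subspace $V$ of $\FF_2$-dimension $\geq \mu n$ to a configuration on which sum--product applies with uniform constants---is the heart of Bourgain's paper and is only gestured at here, and the specific choice of $g$ you suggest (``normalized sum of inverses of affine forms'') would need to be pinned down and shown to survive restriction to every such $V$. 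For the purposes of the present paper, however, none of this is required: the lemma is quoted, and your sketch is appropriate only if you intend to expand the citation into a self-contained account.
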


Explicit constructions of randomness extractors have efficient forward direction of extraction. In some applications, 
we usually need to efficiently invert the process: Given an extractor output, sample a random pre-image. 

\begin{definition}[\cite{CDS12}]\label{def: invertible} 
Let $f$ be a mapping from $\{0,1\}^n$ to $\{0,1\}^m$. For $0\leq v< 1$, a function $\mathsf{Inv}\colon \{0,1\}^m\times\{0,1\}^r \rightarrow\{0,1\}^n$ is called a $v$-inverter for $f$ if the following conditions hold:
\begin{itemize}
\item (Inversion) Given $\mathsf{y}\in\{0,1\}^m$ such that its pre-image $f^{-1}(\mathsf{y})$ is nonempty, for every $\mathsf{r}\in\{0,1\}^r$ we have $f (\mathsf{Inv}(\mathsf{y}, \mathsf{r})) = \mathsf{y}$.
\item (Uniformity) $\mathsf{Inv}(\mathsf{U}_m,\mathsf{U}_r)$ is $v$-close to $\mathsf{U}_n$.
\end{itemize}
A $v$-inverter is called efficient if there is a randomized algorithm that runs in worst-case polynomial time and, given $\mathsf{y}\in\{0,1\}^m$ and $\mathsf{r}$ as a random seed, computes $\mathsf{Inv}(\mathsf{y}, \mathsf{r})$. We call a mapping $v$-invertible if it has an efficient $v$-inverter, and drop the prefix $v$ from the notation when it is zero. We abuse the notation and denote the inverter of $f$ by $f^{-1}$.
\end{definition}


\smallskip
\noindent
A {\em stochastic code} 
has a randomised encoder and a deterministic decoder. The encoder $\mathsf{Enc}\colon\{0,1\}^m\times\mathcal{R}\rightarrow \{0,1\}^n$ uses local randomness $\mathsf{R}\leftarrow\mathcal{R}$ to encode a message $\mathsf{m}\in\{0,1\}^m$. The decoder is a deterministic function $\mathsf{Dec}\colon\{0,1\}^n\rightarrow \{0,1\}^m\cup\{\bot\}$. The decoding probability is defined over the encoding randomness $\mathsf{R}\leftarrow\mathcal{R}$. 
Stochastic codes are known to explicitly achieve the capacity of some adversarial channels \cite{GS10}.

Affine sources play an important role in our constructions. We define a general requirement for the stochastic code used in our constructions. 
\begin{definition}[Stochastic Affine codes]\label{def: affine} Let $\mathsf{Enc}\colon\{0,1\}^m\times\mathcal{R}\rightarrow \{0,1\}^n$ be the encoder of a stochastic code. We say it is a stochastic affine code if for any $\mathsf{r}\in\mathcal{R}$, the encoding function $\mathsf{Enc}(\cdot,\mathsf{r})$ specified by $\mathsf{r}$ is an affine function  
of the message. That is we have 
\begin{align*}
\mathsf{Enc}(\mathsf{m},\mathsf{r})=\mathsf{m}G_\mathsf{r}+\Delta_\mathsf{r},
\end{align*}
where $G_\mathsf{r}\in\{0,1\}^{m\times n}$ 
and $\Delta_\mathsf{r}\in\{0,1\}^n$ are specified by 
the randomness $\mathsf{r}$.

\end{definition}

We then adapt a construction in \cite{GS10} to obtain the following capacity-achieving Stochastic Affine-Erasure Correcting Code (SA-ECC). 
In particular, we show for any $p\in[0,1)$, there is an explicit stochastic affine code that corrects $p$ fraction of adversarial erasures and achieves the rate $1-p$ (see Appendix~\ref{apdx: SA-ECC} for more details). 


\begin{lemma}[Adapted from \cite{GS10}]
\label{th: SA-ECC} 
For every $p\in[0,1)$, and every $\xi>0$, there is an efficiently encodable and decodable stochastic affine code $(\mathsf{Enc},\mathsf{Dec})$ with rate $R=1-p-\xi$ such that for every $\mathsf{m}\in\{0,1\}^{NR}$ and erasure pattern of at most $p$ fraction, we have $\mathsf{Pr}[\mathsf{Dec}(\widetilde{\mathsf{Enc}(\mathsf{m})})=\mathsf{m}]\geq 1-\exp(-\Omega(\xi^2N/\log^2 N))$, where $\widetilde{\mathsf{Enc}(\mathsf{m})}$ denotes the partially erased random codeword and $N$ denotes the length of the codeword.
\end{lemma}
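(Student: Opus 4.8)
The plan is to adapt the Guruswami--Smith construction \cite{GS10} for worst-case bit-flip errors to the erasure setting, noting that erasures are strictly easier and the same control-information machinery applies. The starting point is an explicit capacity-achieving \emph{linear} code $\C_0$ for the binary erasure channel $\mathsf{BEC}_p$: such codes exist (e.g., via explicit capacity-achieving constructions, or Reed--Muller/polar codes for BEC), have rate $1-p-\xi/2$, and decode a \emph{random} $p$-fraction of erasures with failure probability exponentially small in the block length. Because $\C_0$ is linear, its encoder is of the form $\mathsf{m}\mapsto \mathsf{m}G_0$, which already has the affine shape demanded by Definition~\ref{def: affine} with $\Delta=0$. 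The task is to upgrade ``random erasure pattern'' to ``worst-case erasure pattern of relative weight $\le p$'' while preserving the affine structure of the encoder for every fixing of the local randomness.

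First I would describe the encoder. On message $\mathsf{m}$ and local randomness $\mathsf{R}$, the encoder (i) samples a pseudorandom permutation $\pi$ of the $N$ codeword coordinates from a small family indexed by a short seed $\sigma$ (a $t$-wise almost independent permutation family suffices, with $t$ polylogarithmic), (ii) computes $\mathsf{m}G_0$, permutes its coordinates by $\pi$, and (iii) plants the seed $\sigma$ redundantly into a few coordinates using a secondary ``control'' encoding that is itself erasure-robust. The point of the pseudorandom shuffle is that an adversary who must commit to an erasure pattern of relative weight $p$ sees only the permuted codeword; since $\pi$ is (almost) $t$-wise independent and the adversary's pattern is fixed, the pulled-back erasure pattern on $\mathsf{m}G_0$ behaves, on every window, like a random $p$-fraction of erasures, so the BEC decoder for $\C_0$ succeeds except with probability $\exp(-\Omega(\xi^2 N/\log^2 N))$. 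The decoder first recovers $\sigma$ from the surviving control coordinates, reconstructs $\pi$, un-permutes, and runs the $\C_0$ erasure decoder.

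The key point for our application is that this whole pipeline is affine in $\mathsf{m}$ for each fixed $\mathsf{R}=(\sigma,\text{other coins})$: coordinate permutation is a linear map $P_\pi$, the control information depends only on $\sigma$ and not on $\mathsf{m}$, and it is placed in fixed (randomness-determined) coordinates, so the overall encoding is $\mathsf{Enc}(\mathsf{m},\mathsf{r}) = \mathsf{m}(G_0 P_{\pi})\,\|\,(\text{control}(\sigma))$, which can be written as $\mathsf{m}G_\mathsf{r} + \Delta_\mathsf{r}$ by zero-padding $\mathsf{m}G_0P_\pi$ on the control coordinates and letting $\Delta_\mathsf{r}$ carry the control bits. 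I would then bound the rate loss: the permutation seed and control encoding occupy $o(N)$ coordinates, and the base code loses $\xi/2$, so the overall rate is $1-p-\xi$ for large $N$; efficiency of encoding and decoding is inherited from explicitness of $\C_0$, the permutation family, and the control code. The main obstacle is the analysis in the previous paragraph: showing that a worst-case, adversarially chosen erasure pattern, pulled back through a merely $t$-wise almost independent permutation, is close enough to a genuinely random erasure pattern that the BEC decoder's exponential success bound still applies; this is exactly the delicate concentration/derandomization argument in \cite{GS10}, which I would invoke and adapt rather than reprove, taking care that the control information is itself recoverable under worst-case erasures (handled by making it a highly redundant repetition-style encoding spread across pseudorandom coordinates). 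Full details are deferred to Appendix~\ref{apdx: SA-ECC}.
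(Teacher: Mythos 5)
Your high-level plan matches the paper's: adapt the Guruswami--Smith construction by replacing the random-error code with a capacity-achieving \emph{linear} BEC code, keep the pseudorandom-permutation plus hidden-control-information architecture, invoke the $t$-wise-independence concentration argument to turn worst-case erasures into effectively random ones on the base code, and observe that for each fixing of the local randomness the whole map is affine, with the control bits and the masking offset absorbed into $\Delta_\mathsf{r}$ after zero-padding. That is precisely the decomposition in Appendix~\ref{apdx: SA-ECC}.

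Where your sketch is too thin is the control-information encoding, and this is the one building block the paper has to genuinely re-engineer. You propose a ``highly redundant repetition-style encoding spread across pseudorandom coordinates,'' but this glosses over the central difficulty: the decoder does \emph{not} know the control positions a priori and must search over a short positioning seed, so it needs a mechanism to tell a correct guess (which reads genuine control blocks) apart from an incorrect one (which reads payload bits that, being masked by the $t$-wise independent offset, look uniformly random). In \cite{GS10} this is exactly what the inner stochastic code $\mathsf{SC}$ supplies --- it returns $\bot$ with high probability on a random input --- and a plain repetition code has no such detection property. Moreover, the paper points out that the obvious reduction (``set erased bits to $0$ and decode $\mathsf{SC}$ as an error-correcting code'') already fails once the erasure fraction exceeds $1/2$; that is why $\mathsf{SC}$ is replaced by a combination of a systematic AMD code with an erasure list-decodable code, which retains detection while tolerating a $p$ fraction of erasures for any $p<1$. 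Your proposal neither supplies a verification mechanism for the seed search nor notices the $p>1/2$ regime where the naive erasure-to-error reduction breaks down, so as written the decoder cannot reliably recover the permutation seed. Apart from this one missing ingredient, the two arguments coincide.
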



\section{Binary secret sharing schemes} \label{sec: basic SSS} 

In this section, we define our model of nearly-threshold binary secret sharing schemes. We begin with a description of the two models of non-adaptive and adaptive adversaries which can access up to $t$ of the $N$ shares.

A {\em leakage oracle} is a machine $\mathcal{O}(\cdot)$ that takes as input an $N$-bit string $\mathsf{c}\in\{0,1\}^N$ and then answers the {\em leakage queries} of the type $I_j$, for $I_j\subset[N]$, $j=1,2,\ldots, q$. Each query $I_j$ is answered with $\mathsf{c}_{I_j}$. An interactive machine $\mathcal{A}$ that issues the leakage queries is called a {\em leakage adversary}. Let $A_\mathsf{c}=\cup_{j=1}^qI_j$ denote the union of all the index sets chosen by $\mathcal{A}$ when the oracle input is $\mathsf{c}$. 
The oracle is called $t$-bounded, denoted by $\mathcal{O}_t(\cdot)$, if it rejects leakage queries from $\mathcal{A}$ if there exists some $\mathsf{c}\in\{0,1\}^N$ such that $|A_\mathsf{c}|>t$. An adaptive leakage adversary decides the index set $I_{j+1}$ according to the oracle's answers to all previous queries $I_1,\ldots,I_j$. A non-adaptive leakage adversary has to decide the index set $A_\mathsf{c}$ before any information about $\mathsf{c}$ is given. This means that for a non-adaptive adversary, given any oracle input $\mathsf{c}\in\{0,1\}^N$, we always have $A_\mathsf{c}=A$ for some $A\subset[N]$. 
Let $\mathsf{View}^{\mathcal{O}_t(\cdot)}_\mathcal{A}$ denote the view of the leakage adversary $\mathcal{A}$ interacting with a $t$-bounded leakage oracle. 
When $\mathcal{A}$ is non-adaptive, we use the shorthand $\mathsf{View}^{\mathcal{O}_t(\cdot)}_\mathcal{A}=(\cdot)_A$, for some $A\subset[N]$ of size $|A|\leq t$. 

A function $\varepsilon\colon\mathbb{N}\rightarrow\mathbb{R}$ is called {\em negligible} if for every positive integer $k$, there exists an $N_k\in\mathbb{N}$ such that $|\varepsilon(N)|<\frac{1}{N^k}$ for all $N>N_k$. The following definition of ramp Secret Sharing Scheme (SSS) allows imperfect privacy and reconstruction with errors bounded by negligible functions $\varepsilon(\cdot)$ and $\delta(\cdot)$, respectively.

\begin{definition} \label{def: SSS} For any $0\leq\tau<\rho\leq1$, an $(\varepsilon(N),\delta(N))$-SSS with relative threshold pair $(\tau,\rho)$ is a pair of polynomial-time algorithms $(\mathsf{Share},\mathsf{Recst})$,
\begin{align*}
\mathsf{Share}\colon\{0,1\}^{\ell(N)}\times\mathcal{R}\rightarrow\{0,1\}^N,
\end{align*}
where $\mathcal{R}$ denote the randomness set, and
\begin{align*}
\mathsf{Recst}\colon\widetilde{\{0,1\}^N}\rightarrow\{0,1\}^{\ell(N)}\cup\{\bot\},
\end{align*}
where $\widetilde{\{0,1\}^N}$ denotes the subset of $(\{0,1\}\cup\{?\})^N$ with at least $N\rho$ components not equal to the erasure symbol ``$?$'', that satisfy the following properties.\begin{itemize}

\item Reconstruction: Given $r(N)=N\rho$ correct shares of a share vector $\mathsf{Share}(\mathsf{s})$, the reconstruct algorithm $\mathsf{Recst}$ reconstructs the secret $\mathsf{s}$ with probability at least $1-\delta(N)$. 

When $\delta(N)=0$, we say the SSS has perfect reconstruction.

\item Privacy (non-adaptive/adaptive): 

  \begin{itemize}
    \item Non-adaptive: for any $\mathsf{s}_0,\mathsf{s}_1\in\{0,1\}^{\ell(N)}$, any $A\subset [N]$ of size $|A|\leq t(N)=N\tau$,
      \begin{equation}\label{eq: WtII security}
      \mathsf{SD}(\mathsf{Share}(\mathsf{s}_0)_{A};\mathsf{Share}(\mathsf{s}_1)_{A})\leq \varepsilon(N).
      \end{equation}
    \item Adaptive: for any $\mathsf{s}_0,\mathsf{s}_1\in \{0,1\}^{\ell(N)}$ and any adaptive adversary $\mathcal{A}$ interacting with a $t(N)$-bounded leakage oracle $\mathcal{O}_{t(N)}(\cdot)$ for $t(N)=N\tau$,
    \begin{equation}\label{eq: adaptive privacy}
    \mathsf{SD}\left (\mathsf{View}_\mathcal{A}^{\mathcal{O}_{t(N)}(\mathsf{Share}(\mathsf{s}_0))};\mathsf{View}_\mathcal{A}^{\mathcal{O}_{t(N)}(\mathsf{Share}(\mathsf{s}_1))}\right )\leq \varepsilon(N).
    \end{equation}
   \end{itemize}
When $\varepsilon(N)=0$, we say the SSS has perfect privacy. 
\end{itemize}
The difference $\gamma=\rho-\tau$ is called the relative gap, since $N\gamma=r(N)-t(N)$ is the threshold gap of the scheme. When clear from context, we write $\varepsilon,\delta,t,k,\ell$ instead of $\varepsilon(N),\delta(N),t(N),r(N),\ell(N)$. When the parameters are not specified, we call a $(\varepsilon,\delta)$-SSS simply a binary SSS.
\end{definition}



In the above definition, a binary SSS has a pair of designed relative thresholds $(\tau,\rho)$. In this work, we are concerned with constructing nearly-threshold binary SSS, namely, binary SSS with arbitrarily small relative gap $\gamma=\rho-\tau$. We also want our binary SSS to share a large secret $\ell=\Omega(N)$.



\begin{definition} 
For any $0\leq\tau<\rho\leq1$,
a coding rate $R\in[0,1]$ is achievable if there exists a family of $(\varepsilon,\delta)$-SSS with relative threshold pair $(\tau,\rho)$ such that $\varepsilon$ and $\delta$ are both negligible in $N$ and $\frac{\ell}{N}\rightarrow R$.
The highest achievable coding rate of binary SSS for a pair $(\tau,\rho)$ is called its capacity.

\end{definition}

By relating binary SSS 
to Wyner wiretap codes with a pair of BEC's, 
we obtain the following coding rate upper bound for binary SSS.
\begin{lemma}\label{lem: upperbound}
For $0\leq\tau<\rho\leq1$, the coding rate capacity of binary SSS with relative threshold pair $(\tau,\rho)$ is asymptotically upper-bounded by $\rho-\tau$.
\end{lemma}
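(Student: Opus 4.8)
The plan is to convert a binary SSS with relative threshold pair $(\tau,\rho)$ into a code for a Wyner wiretap channel whose main channel is a $\mathsf{BEC}_{p_m}$ with $p_m = 1-\rho$ and whose wiretapper channel is a $\mathsf{BEC}_{p_w}$ with $p_w = 1-\tau$, and then invoke the known secrecy capacity $p_w - p_m = \rho - \tau$ of this pair of channels as the upper bound. Concretely, given a family of $(\varepsilon,\delta)$-SSS achieving coding rate $R$, I would use its $\mathsf{Share}$ algorithm, applied to a uniformly random secret $\mathsf{M}\leftarrow\{0,1\}^{\ell}$, as the wiretap encoder. Two things need to be checked: that this encoder lets the legitimate receiver (who sees the output of $\mathsf{BEC}_{1-\rho}$) decode with vanishing error, and that it leaks a vanishing amount of mutual information to the eavesdropper (who sees the output of $\mathsf{BEC}_{1-\tau}$).

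\textbf{Key steps.} First, for reconstruction: a random erasure pattern from $\mathsf{BEC}_{1-\rho}$ retains, with high probability, at least $(\rho - o(1))N$ unerased coordinates by a Chernoff bound; on this event the SSS reconstruction guarantee (which holds for \emph{every} set of $\rho N$ correct shares, hence in particular for any random set of at least that size — one may always throw away extra coordinates) recovers $\mathsf{M}$ except with probability $\delta(N)$. So the main-channel decoding error is at most $\delta(N) + o(1) \to 0$. Second, for secrecy: condition on the eavesdropper's erasure pattern $E$. With high probability over $\mathsf{BEC}_{1-\tau}$, the set $A = [N]\setminus E$ of observed coordinates has size at most $(\tau + o(1))N$; if it is slightly larger than $\tau N$ I would handle the small excess separately (e.g. absorb it into the $o(1)$ slack in the final rate, or note the adversary only sees \emph{more} erasures makes its job harder — actually observing \emph{fewer} than $\tau N$ is the good case, so the bad event is $|A| > \tau N$, which has exponentially small probability for $\tau$ bounded away from $\rho$). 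On the typical event, the non-adaptive privacy condition~\eqref{eq: WtII security} gives $\mathsf{SD}(\mathsf{Share}(\mathsf{s}_0)_A; \mathsf{Share}(\mathsf{s}_1)_A) \le \varepsilon(N)$ for all pairs of secrets, which by a standard averaging/convexity argument implies $\mathsf{SD}(\mathsf{Share}(\mathsf{M})_A \mid \mathsf{M}=\mathsf{s}; \; \mathsf{Share}(\mathsf{M})_A) \le \varepsilon(N)$ for every $\mathsf{s}$, and hence via Pinsker (or the known relation between statistical distance and mutual information for the uniform source, with a $\log$-factor loss that is absorbed since $\varepsilon$ is negligible) that $\mathsf{I}(\mathsf{M}; \mathsf{Share}(\mathsf{M})_A)$ is negligible. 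Averaging over $E$, including the negligibly-probable bad event where I bound the leakage trivially by $\ell \le N$, keeps the total leakage $o(\ell)$, i.e.\ weak secrecy holds. Therefore $R$ is an achievable secrecy rate for the $(\mathsf{BEC}_{1-\rho}, \mathsf{BEC}_{1-\tau})$ wiretap pair, so $R \le \rho - \tau$; taking the supremum over achievable $R$ gives the claimed capacity bound.

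\textbf{Main obstacle.} The delicate point is the matching of ``worst-case erasure patterns'' in the SSS definition with ``random erasure patterns'' in the wiretap channel, and in particular the direction of the inequalities: the SSS reconstructs from \emph{at least} $\rho N$ correct shares and is private against \emph{at most} $\tau N$ observed shares, so I must make sure the random erasure counts land on the favorable side of both thresholds simultaneously with overwhelming probability — this is fine precisely because $\tau < \rho$ leaves room, but the Chernoff estimates and the handling of the boundary events (where I fall back to trivial bounds) need to be done carefully so that nothing but $o(1)$ terms survive. A secondary technical nuisance is converting the per-secret statistical-distance privacy bound into a mutual-information bound against the \emph{uniform} message distribution used in the wiretap formulation; this is routine (it is exactly the kind of equivalence discussed around semantic security in the introduction) but must be invoked rather than ignored. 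Everything else — the Chernoff bound, discarding surplus shares, convexity of statistical distance — is standard.
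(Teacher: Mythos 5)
Your overall strategy---reducing the SSS to a Wyner wiretap code for a pair of binary erasure channels and invoking the known secrecy capacity of that pair---is exactly the route the paper takes. However, there is a genuine gap in your choice of erasure probabilities. You set the main channel to $\mathsf{BEC}_{1-\rho}$ and the wiretapper channel to $\mathsf{BEC}_{1-\tau}$. With these parameters, the number of unerased main-channel coordinates concentrates around $\rho N$, so it falls \emph{below} $\rho N$ with probability roughly $1/2$; likewise, the number of wiretapper observations concentrates around $\tau N$ and \emph{exceeds} $\tau N$ with probability roughly $1/2$. Your own write-up exhibits the symptom: you say the main channel retains ``at least $(\rho-o(1))N$ unerased coordinates,'' which is strictly fewer than the $\rho N$ that the reconstruction guarantee requires, and then invoke that guarantee anyway. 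The claim that the bad event $|A| > \tau N$ is ``exponentially small'' under erasure probability exactly $1-\tau$ is false---it has probability close to $1/2$---and the gap between $\tau$ and $\rho$ has no bearing on this tail, since the wiretapper channel parameter does not involve $\rho$.

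The fix, which is what the paper's proof does, is to introduce a slack $\xi>0$ and set $p_m = 1-\rho-\xi$ and $p_w = 1-\tau+\xi$. Now the Chernoff--Hoeffding bound does give exponentially small probability that the main channel leaves fewer than $\rho N$ coordinates unerased or that the wiretapper observes more than $\tau N$ coordinates, so the SSS reconstruction and privacy guarantees apply outside an exponentially rare bad event, on which one falls back to trivial bounds as you suggest. The resulting wiretap secrecy capacity is $p_w - p_m = \rho-\tau+2\xi$, and letting $\xi \to 0$ recovers the bound $\rho-\tau$. As a secondary remark, the paper converts the per-pair-of-secrets statistical-distance privacy into wiretap secrecy by first establishing indistinguishability security over the channel randomness and then citing the known implication to Wyner's equivocation secrecy; your Pinsker-style conversion to mutual information would also work, with the same care about the negligible bad event. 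Everything else in your proposal---discarding surplus coordinates, convexity over erasure patterns, averaging over channel randomness---is sound and matches the paper's argument.
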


\begin{proof}

Let $(\mathsf{Share},\mathsf{Recst})$ be a non-adaptive binary SSS with relative threshold pair $(\tau,\rho)$. We use $\mathsf{Share}$ as the encoder and $\mathsf{Recst}$ as the decoder, and verify in the following that we obtain a Wyner wiretap code for a BEC$_{p_m}$ main channel and a BEC$_{p_w}$ wiretapper channel, where $p_m=1-\rho-\xi$ and $p_w=1-\tau+\xi$, respectively, for arbitrarily small $\xi>0$. 
Erasure in binary SSS is worst-case, while it is probabilistic in the Wyner wiretap model.   We however note that asymptotically,  the number of random erasures of BEC$_{p_m}$ and BEC$_{p_w}$ approaches $Np_m$ and $Np_w$, respectively, with overwhelming probability, and so a code that protects against worst-case erasure can be used as a wiretap code with probabilistic erasure.  
In our proof we also take into account the difference in the  secrecy notion in SSS and in  the case of Wyner wiretap code.

 The $N$-bit output $\mathsf{Y}=\mathsf{Y}_1,\ldots,\mathsf{Y}_N$ of a BEC$_p$ has a distribution where each bit is identically independently erased with probability $p$. 
By the Chernoff-Hoeffding bounds, the fraction $\eta$ of erasures satisfies the following. For arbitrarily small $\xi>0$,
$$
\left\{
\begin{array}{ll}
\mathsf{Pr}[\eta\geq p+\xi]&\leq \left(\left(\frac{p}{p+\xi}\right)^{p+\xi}     \left(\frac{1-p}{1-p-\xi}\right)^{1-p-\xi}\right)^{N};\\
\mathsf{Pr}[\eta\leq p-\xi]&\leq \left(\left(\frac{p}{p-\xi}\right)^{p-\xi}     \left(\frac{1-p}{1-p+\xi}\right)^{1-p+\xi}\right)^{N}.\\
\end{array}
\right.
$$
Applying the two inequalities to BEC$_{p_m}$ and BEC$_{p_w}$, respectively, we obtain the following conclusions. The probability that BEC$_{p_m}$ has at most $p_m+\xi=1-\rho$ fraction of erasures and the probability that BEC$_{p_w}$ has at least $p_w-\xi=1-\tau$ fraction of erasures are
both at most $\exp(-\Omega(N))$ for arbitrarily small $\xi>0$.

We are ready to prove the Wyner wiretap reliability and secrecy properties as defined in \cite{Wyner,CK78}.

We show correct decoding with probability $1-o(1)$. 
When the erasures are below $p_m+\xi=1-\rho$ fraction, it follows directly from the reconstructability of SSS that the decoding error is bounded from above by $\delta$, which is arbitrarily small for big enough $N$, where the probability is over the randomness of the encoder. 
When the erasures are not below $p_m+\xi=1-\rho$ fraction, we do not have correct decoding guarantee. But as argued above, this only occurs with a negligible probability over the randomness of the BEC$_{p_m}$. 
Averaging over the channel randomness of the BEC$_{p_m}$, we have correct decoding with probability $1-o(1)$.

We show random message equivocation secrecy $\mathsf{H}(\mathsf{S}|\mathsf{W})\geq \ell(1-o(1))$, where $\mathsf{S}$ is a uniform secret and $\mathsf{W}=\mathsf{BEC}_{p_w}(\mathsf{Share}(\mathsf{S}))$ is the view of the wiretapper. 
We in fact first prove the wiretap indistinguishability security as defined in \cite{btv12} and then deduce that it implies Wyner wiretap secrecy as defined in \cite{Wyner,CK78}. For each of the erasure patterns (say $A\subset[N]$ are not erased) of BEC$_{p_w}$ that exceeds $p_w-\xi=1-\tau$ fraction (equivalently, $|A|\leq N\tau$), the binary SSS privacy gives that for any two secrets, the corresponding views $\mathsf{W}|(\mathsf{S}=\mathsf{s}_0,A \mbox{ not erased})$ and $\mathsf{W}|(\mathsf{S}=\mathsf{s}_1,A\mbox{ not erased})$ are indistinguishable with error $\varepsilon$, which is arbitrarily small for big enough $N$. The distribution $(\mathsf{W}|\mathsf{S}=\mathsf{s}_0)$ and $(\mathsf{W}|\mathsf{S}=\mathsf{s}_1)$ are convex combinations of $\mathsf{W}|(\mathsf{S}=\mathsf{s}_0,A \mbox{ not erased})$ and $\mathsf{W}|(\mathsf{S}=\mathsf{s}_1,A\mbox{ not erased})$, respectively, for all the erasure patterns $A$ of BEC$_{p_w}$. As argued before, the probability that the erasures does not exceed $p_w-\xi=1-\tau$ fraction is negligible. We average over the channel randomness of the wiretapper channel BEC$_{p_w}$ and claim that the statistical distance of $(\mathsf{W}|\mathsf{S}=\mathsf{s}_0)$ and $(\mathsf{W}|\mathsf{S}=\mathsf{s}_1)$ is arbitrarily small for big enough $N$. According to \cite{btv12}, this is strictly stronger than the Wyner wiretap secrecy. 


Finally we use the coding rate upper bound of the Wyner wiretap code to bound the coding rate of binary SSS. We have shown that a binary SSS with relative threshold pair $(\tau,\rho)$ is a wiretap code for the pair (BEC$_{p_m}$,BEC$_{p_w}$). According to \cite{Wyner,CK78}, the achievable coding rate for the Wyner wiretap code is $(1-p_m)-(1-p_w)=p_w-p_m=\rho-\tau+2\xi$. 
Since this holds for arbitrarily small $\xi>0$, we obtain an upper bound of $\rho-\tau$ for binary SSS with relative threshold pair $(\tau,\rho)$. 
\end{proof}

In the rest of the paper, we give two constant rate constructions of nearly-threshold binary SSS against non-adaptive adversary and adaptive adversary, respectively. The non-adaptive adversary construction is optimal in the sense that the coding rate achieves the upper bound in Lemma \ref{lem: upperbound}.

\section{Secret sharing against non-adaptive adversaries}\label{sec: capacity}


We first present our construction of capacity-achieving binary SSS against non-adaptive adversaries, using linear strong seeded extractors and optimal rate stochastic erasure correcting codes. The following theorem describes the construction using these components.

\begin{theorem}\label{th: wiretap} 
Let $\mathsf{Ext} \colon \{0,1\}^d\times\{0,1\}^n\rightarrow\{0,1\}^\ell$ be a linear strong seeded $(n-t,\frac{\varepsilon}{8})$-extractor and $\mathsf{Ext^{-1}}(\mathsf{z},\cdot) \colon \{0,1\}^\ell\times\mathcal{R}_1\rightarrow\{0,1\}^n$ be the inverter of the function $\mathsf{Ext}(\mathsf{z},\cdot)$ that maps an $\mathsf{s}\in\{0,1\}^\ell$ to one of its pre-images chosen uniformly at random. 
Let $(\mathsf{SA\mbox{-}ECCenc},\mathsf{SA\mbox{-}ECCdec})$ be a stochastic affine-erasure correcting code with the encoder $\mathsf{SA\mbox{-}ECCenc}\colon\{0,1\}^{d+n}\times\mathcal{R}_2\rightarrow\{0,1\}^N$ that tolerates $N-r$ erasures and decodes with success probability at least $1-\delta$.   
Then the following coding scheme $(\mathsf{Share},\mathsf{Recst})$ is a non-adaptive $(\varepsilon,\delta)$-SSS with threshold pair $(t,r)$. 
\begin{align*}
\left\{
\begin{array}{ll}
\mathsf{Share}(\mathsf{s})&=\mathsf{SA\mbox{-}ECCenc(Z||Ext^{-1}}(\mathsf{Z},\mathsf{s})),\mathrm{where}\  \mathsf{Z}\stackrel{\$}{\leftarrow}\{0,1\}^d;\\
\mathsf{Recst}(\tilde{\mathsf{v}})&=\mathsf{Ext(\mathsf{z},\mathsf{x}), \mathrm{where}\  (\mathsf{z}||\mathsf{x})=SA\mbox{-}ECCdec}(\tilde{\mathsf{v}}).
\end{array}
\right.
\end{align*}
Here $\tilde{\mathsf{v}}$ denotes an incomplete version of a share vector $\mathsf{v}\in\{0,1\}^N$ with some of its components replaced by erasure symbols.
\end{theorem}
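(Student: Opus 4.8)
The goal is to verify the two SSS properties—reconstruction and non-adaptive privacy—for the composed scheme $(\mathsf{Share},\mathsf{Recst})$. Reconstruction is the easy direction: given $r = N\rho$ correct shares, the SA-ECC decoder recovers $(\mathsf{z}\|\mathsf{x}) = \mathsf{SA\mbox{-}ECCdec}(\tilde{\mathsf{v}})$ with probability at least $1-\delta$ by Lemma~\ref{th: SA-ECC} (instantiated to tolerate $N-r$ erasures), and then by the Inversion property of the inverter we have $\mathsf{x} = \mathsf{Ext^{-1}}(\mathsf{z},\mathsf{s})$ for some valid randomness, so $\mathsf{Ext}(\mathsf{z},\mathsf{x}) = \mathsf{s}$ deterministically. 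Hence $\mathsf{Recst}$ outputs $\mathsf{s}$ with probability $\geq 1-\delta$, as required. I would dispatch this in a couple of lines.

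\textbf{Privacy.} This is the substantive part. Fix an arbitrary $A \subseteq [N]$ with $|A| \leq t = N\tau$ and two secrets $\mathsf{s}_0,\mathsf{s}_1$. The plan is to show $\mathsf{SD}(\mathsf{Share}(\mathsf{s}_0)_A ; \mathsf{Share}(\mathsf{s}_1)_A) \leq \varepsilon$. The key structural observation is linearity: for each fixing of the SA-ECC randomness $\mathsf{r}_2 \in \mathcal{R}_2$, the map $(\mathsf{z}\|\mathsf{x}) \mapsto \mathsf{SA\mbox{-}ECCenc}(\mathsf{z}\|\mathsf{x},\mathsf{r}_2)$ is affine, so restricting to coordinates $A$ and composing with the (also linear, for each fixed seed $\mathsf{z}$) map $\mathsf{x} \mapsto \mathsf{Ext}(\mathsf{z},\mathsf{x})$ keeps everything affine. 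Concretely, I would condition on $(\mathsf{Z}=\mathsf{z}, \mathsf{R}_2 = \mathsf{r}_2)$ and also on the internal randomness of the inverter, so that $\mathsf{X} = \mathsf{Ext^{-1}}(\mathsf{z},\mathsf{s})$ ranges uniformly over the affine subspace $\mathsf{Ext}(\mathsf{z},\cdot)^{-1}(\mathsf{s})$, which—since $\mathsf{Ext}(\mathsf{z},\cdot)$ is a linear map onto $\{0,1\}^\ell$—is a coset of a fixed $(n-\ell)$-dimensional subspace $\ker \mathsf{Ext}(\mathsf{z},\cdot)$, the same subspace for both $\mathsf{s}_0$ and $\mathsf{s}_1$. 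Then $\mathsf{Share}(\mathsf{s})_A$ conditioned on $(\mathsf{z},\mathsf{r}_2)$ is an affine image of this coset, so the conditional distributions for $\mathsf{s}_0$ and $\mathsf{s}_1$ are translates of one another by a fixed vector depending only on $(\mathsf{z},\mathsf{r}_2, \mathsf{s}_0 - \mathsf{s}_1)$. To show these translates are close, I would invoke the extractor: the adversary's view $\mathsf{Share}(\mathsf{Z}\|\mathsf{X})_A$ with $\mathsf{X} \leftarrow \mathsf{Ext^{-1}}(\mathsf{Z}, \mathsf{U}_\ell)$ (uniform secret), conditioned on the revealed coordinates, must look like a linear map applied to a near-affine source; feeding the near-uniformity of $\mathsf{Ext^{-1}}(\mathsf{U}_\ell, \cdot) \approx_{v} \mathsf{U}_n$ through, plus the fact that $n-t$ coordinates of information suffice for the extractor's min-entropy requirement, one gets that with high probability over $\mathsf{Z}$, the leaked view is $\varepsilon/8$-close to uniform \emph{and independent of the secret}. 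The worst-case-message guarantee is then extracted by a hybrid/averaging argument over $\mathsf{Z}$: even though the extractor only guarantees statistical closeness for an average seed, linearity lets us argue that for \emph{most} seeds the secret is information-theoretically hidden (the leaked view lies in a secret-independent affine subspace), and the bad seeds contribute at most $O(\varepsilon)$ total.

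\textbf{Main obstacle.} The delicate point—flagged in the paper's own "approach and techniques" section—is converting the extractor's \emph{average-seed} closeness into the \emph{worst-case-message} SSS guarantee. Naively, $\mathsf{SD}(\mathsf{S},\mathsf{Ext}(\mathsf{S},\mathsf{X}); \mathsf{S},\mathsf{U}) \leq \varepsilon/8$ controls only an average over seeds and over a uniform secret, not the distance between $\mathsf{Share}(\mathsf{s}_0)_A$ and $\mathsf{Share}(\mathsf{s}_1)_A$ for two adversarially chosen fixed secrets. Here I would use the linear structure crucially: for a fixed seed $\mathsf{z}$, either the leaked affine functional on $\ker\mathsf{Ext}(\mathsf{z},\cdot)$-cosets is "degenerate enough" that the view is literally identically distributed for $\mathsf{s}_0$ and $\mathsf{s}_1$ (because the difference vector falls inside the image of the leaked coordinate map's kernel), or it isn't—and a counting argument bounds the fraction of "bad" seeds by the extractor error up to constant factors, because a bad seed would let the adversary distinguish a uniform secret from uniform, contradicting the extractor guarantee. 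Assembling this requires care with the constants (the factor $8$ in the extractor error presumably absorbs: a factor $2$ for passing from a uniform secret to the two-secret indistinguishability, a factor $2$ for the inverter imperfection $v$, a factor $2$ for the SA-ECC randomness averaging) and with the order of conditioning, but the structure is a standard invertible-extractor argument in the spirit of \cite{CDS12} augmented by the linearity trick. I expect this hybrid argument over seeds, and getting the $\varepsilon$ bookkeeping to close, to be the bulk of the proof.
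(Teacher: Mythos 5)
Your proposal matches the paper's approach: fix the SA-ECC randomness and seed so the view is an affine function of the preimage, exploit linearity of $\mathsf{Ext}(\mathsf{z},\cdot)$ to argue that for all but a small fraction of seeds the leaked view is \emph{exactly} secret-independent, bound the bad-seed fraction by a Markov/averaging argument, and close with the triangle inequality; the paper isolates the extractor part as a standalone statement (Lemma~\ref{th: extractor property}) and makes explicit the key fact you only gesture at, namely that a linear image of an affine source on $\{0,1\}^m$ is either exactly uniform or at least $\tfrac{1}{2}$-far from it (Claim~\ref{claim}), which is what upgrades ``close for most seeds'' to ``identical for most seeds.'' One bookkeeping correction: the factor of $8$ is not absorbing inverter error or SA-ECC randomness averaging (the inverter of the linear map $\mathsf{Ext}(\mathsf{z},\cdot)$ is exact, and the argument holds pointwise for every fixed $\mathsf{r}_2$); it is $4$ from the Markov bound with threshold $\tfrac{1}{4}$ applied to the seed-averaged extractor error, times $2$ from comparing each of $\mathsf{Share}(\mathsf{s}_0)_A$ and $\mathsf{Share}(\mathsf{s}_1)_A$ to the secret-independent hub distribution $\mathsf{U}_d\times\mathcal{D}$.
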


The proof of Theorem \ref{th: wiretap} will follow naturally from Lemma \ref{th: extractor property}.  We first state and prove this general property of a linear strong extractor, which is of independent interest. 
For the property to hold, we in fact only need the extractor to be able to extract from affine sources. %
The proof of Lemma \ref{th: extractor property} is a bit long. We then break it into a claim and two propositions.
\begin{lemma}\label{th: extractor property}
Let $\ext\colon \zo^d \times \zo^n \to \zo^m$ be a linear strong $(k,\varepsilon)$-extractor. Let $f_A\colon \zo^{d+n} \to \zo^t$ be any affine function with output length $t\leq n-k$. For any $\mathsf{m},\mathsf{m}'\in\{0,1\}^m$, let $(\mathsf{Z},\mathsf{X})=(\mathsf{U}_d,\mathsf{U}_n)|\left(\mathsf{Ext}(\mathsf{U}_d,\mathsf{U}_n)=\mathsf{m}\right)$ and $(\mathsf{Z}',\mathsf{X}')=(\mathsf{U}_d,\mathsf{U}_n)|\left(\mathsf{Ext}(\mathsf{U}_d,\mathsf{U}_n)=\mathsf{m}'\right)$. We have 
\begin{equation}\label{eq: pairwise}
\mathsf{SD}(f_A(\mathsf{Z},\mathsf{X});f_A(\mathsf{Z}',\mathsf{X}'))\leq 8\varepsilon.
\end{equation}
\end{lemma}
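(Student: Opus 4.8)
The plan is to leverage the linear structure of $\ext(\mathsf{z},\cdot)$ for every fixed seed together with the extractor guarantee, and to transfer the statistical distance statement from the "output" side ($\ext$ evaluated on a uniform input) to the "conditioned" side (uniform input conditioned on a fixed output value). First I would set up notation: write $\mathsf{U}=(\mathsf{U}_d,\mathsf{U}_n)$, let $\mathsf{M}=\ext(\mathsf{U}_d,\mathsf{U}_n)$ be the induced (nearly uniform) output distribution, and observe that by the strong-extractor property $\mathsf{SD}((\mathsf{U}_d,\mathsf{M});(\mathsf{U}_d,\mathsf{U}_m))\le \varepsilon$, hence also $\mathsf{SD}(\mathsf{M};\mathsf{U}_m)\le\varepsilon$. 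The key structural fact is that, for each fixed seed $\mathsf{z}$, the set $\ext(\mathsf{z},\cdot)^{-1}(\mathsf{m})$ is a coset of the kernel of the linear map $\ext(\mathsf{z},\cdot)$, so $(\mathsf{Z},\mathsf{X})$ conditioned on $\mathsf{Z}=\mathsf{z}$ is uniform over an affine subspace whose dimension is $n-\operatorname{rank}(\ext(\mathsf{z},\cdot))$, which is at least $n-m\ge k$ (handling seeds where the rank is deficient or the preimage is empty as a negligible-weight bad event via the extractor bound). Thus conditioned on a typical seed, $(\mathsf{Z},\mathsf{X})$ restricted to that seed is an affine $(n,k)$-source (after accounting for the $d$ seed coordinates, the joint $(\mathsf{Z},\mathsf{X})$ lives on an affine subspace of $\zo^{d+n}$ of dimension $\ge k$), which is exactly the kind of source $f_A$ is being applied to.

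The core of the argument I would organize as a Claim plus two Propositions, as the excerpt signals. The Claim would establish that $(\mathsf{Z},\mathsf{X})$ is, up to statistical distance $O(\varepsilon)$, a convex combination of affine sources of dimension $\ge k$ — concretely, condition on $\mathsf{Z}=\mathsf{z}$ and note that for a $(1-O(\varepsilon))$-fraction of seeds the conditional distribution of $(\mathsf{Z},\mathsf{X})$ is uniform on an affine space of dimension $\ge k$ that contains enough entropy. One Proposition would then say: for any affine source $\mathsf{S}$ on $\zo^{d+n}$ of dimension $\ge k$ and any affine map $f_A$ with $t\le n-k$ output bits, the pushforward $f_A(\mathsf{S})$ is close (ideally, in the idealized case exactly equal, or close up to the error budget) to a fixed distribution independent of which affine space $\mathsf{S}$ sits in — this is where the inequality $t\le n-k$ is used, since an affine map of rank $\le t$ restricted to an affine space of codimension $\le n-k$ behaves predictably. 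Actually the cleaner route: since $f_A$ composed with the inclusion of the affine subspace is itself an affine map, and the extractor guarantees uniformity of $\ext$ on that subspace, one shows $(\ext(\mathsf{Z},\mathsf{X}), f_A(\mathsf{Z},\mathsf{X}))$ is close to $(\mathsf{U}_m, f_A(\text{uniform input}))$, and since $\ext(\mathsf{Z},\mathsf{X})$ is pinned to $\mathsf{m}$ by construction, slicing out the $\ext = \mathsf{m}$ fiber costs only a factor governed by $\Pr[\mathsf{M}=\mathsf{m}]\approx 2^{-m}$. The second Proposition would do the bookkeeping of the constants to land at $8\varepsilon$.

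The hard part, I expect, will be the conditioning/slicing step — controlling the statistical distance after conditioning on the measure-$\approx 2^{-m}$ event $\{\ext(\mathsf{U}_d,\mathsf{U}_n)=\mathsf{m}\}$. Naively, conditioning on a low-probability event can blow up statistical distance by the reciprocal of that probability, which would be catastrophic ($2^m$ is enormous). The way around this is to not condition the "joint with $\ext$-output" distribution directly, but instead to argue at the level of the affine-source decomposition: write $(\mathsf{U}_d,\mathsf{U}_n) = \sum_{\mathsf{z}} \Pr[\mathsf{Z}=\mathsf{z}]\cdot(\text{uniform on }\ext(\mathsf{z},\cdot)^{-1}(\text{all of }\zo^m))$, and within each seed the preimages $\ext(\mathsf{z},\cdot)^{-1}(\mathsf{m})$ for varying $\mathsf{m}$ are parallel cosets of the same kernel, all of the same size when the map is surjective. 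So conditioning on $\ext(\mathsf{z},\cdot) = \mathsf{m}$ versus $= \mathsf{m}'$ just translates the affine space by a fixed vector, and the pushforward under $f_A$ changes by at most the "affine-ness" can detect — this is precisely where linearity is doing the heavy lifting and where the $8\varepsilon$ (rather than $2^m\varepsilon$) comes from. I would carefully compare $f_A$ on the coset for $\mathsf{m}$ with $f_A$ on the coset for $\mathsf{m}'$ seed-by-seed, use that the kernels coincide so the two pushforwards are shifts of one another, and then average over seeds, absorbing the $O(\varepsilon)$-fraction of bad seeds into the error. Getting the exact constant $8$ is then a matter of chaining: $\varepsilon$ for $\mathsf{M}\approx\mathsf{U}_m$, another factor from restricting to good seeds, a triangle inequality through the idealized common distribution, and a symmetric term for the primed copy — four or so applications of the $\varepsilon$-bound, doubled by symmetry, comfortably within $8\varepsilon$.
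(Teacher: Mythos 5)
Your setup is aligned with the paper's (condition on the seed, exploit that $\ext(\mathsf{z},\cdot)^{-1}(\mathsf{m})$ and $\ext(\mathsf{z},\cdot)^{-1}(\mathsf{m}')$ are parallel cosets, bound the contribution of bad seeds using the extractor guarantee), but the step you put at the center — ``the two pushforwards are shifts of one another, then average'' — does not actually close the argument. Two distributions that are translates of each other can have statistical distance $1$; the observation that $f_A$ applied to parallel cosets yields translated output distributions therefore proves nothing unless you also show the translate is \emph{trivial} on the level of distributions. What makes the translate trivial is a conditional independence statement: given a ``good'' seed $\mathsf{z}$, the output $\mathsf{M}=\ext(\mathsf{z},\mathsf{X})$ and the leaked value $\mathsf{W}=W(\mathsf{X})$ are exactly independent, so $\mathsf{W}|(\mathsf{Z}=\mathsf{z},\mathsf{M}=\mathsf{m})$ is one fixed distribution $\cD$ regardless of $\mathsf{m}$. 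You never derive this independence.

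The missing ingredient the paper uses to get there is a dichotomy for affine sources: because $\ext(\mathsf{z},\cdot)$ is linear, $\ext(\mathsf{z},\mathsf{K})$ (with $\mathsf{K}$ uniform on $\ker W$) is itself an affine source, and an affine source on $\zo^m$ is either \emph{exactly} $\mathsf{U}_m$ or at least $1/2$-far from it (its support is a proper affine subspace). Combined with an averaging argument over seeds, the extractor bound of $\varepsilon$ (on $\ker W$, which has entropy $\geq n - t \geq k$) upgrades to: for at least a $1-4\varepsilon$ fraction of seeds, $\ext(\mathsf{z},\mathsf{K})$ is \emph{exactly} uniform. Linearity then propagates exact uniformity to every coset $\mathsf{K}+\mathsf{v}$, which is precisely what delivers conditional independence of $\mathsf{M}$ and $\mathsf{W}$ for good seeds, and hence makes your ``shift'' the identity. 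Without this dichotomy you only have approximate uniformity, and your conditioning on the measure-$\approx 2^{-m}$ event would indeed blow up the error by $2^m$, the very problem you flagged. So: keep your outline, but you need to (i) state and prove the affine dichotomy (``exactly uniform or $\geq 1/2$-far''), (ii) use it plus Markov to define the good-seed set $\cG$ of density $\geq 1-4\varepsilon$, and (iii) derive $\mathsf{M}\perp\mathsf{W}$ conditioned on $\mathsf{Z}=\mathsf{z}\in\cG$ — only then does the coset observation do the work you want it to.
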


\begin{proof}
For the above pairwise guarantee (\ref{eq: pairwise}) to hold, it suffices to show that for every fixed choice of $\mathsf{m} \in \zo^m$,
the distribution of $f_A(\mathsf{Z},\mathsf{X})$ is $(4\varepsilon)$-close to $\mathsf{U}_d \times \cD$,
where $\mathsf{U}_d$ is the uniform distribution on $\zo^d$.

Without loss of generality, we assume that the linear function $\ext(\mathsf{z},\cdot):\zo^n \to \zo^m$, for every seed $\mathsf{z}$, has the entire $\zo^m$ as its image 
\footnote{If this condition is not satisfied for some choice $\mathsf{z}$ of the seed, there must be linear dependencies between the $m$ output bits of $\ext(\mathsf{z},\cdot)$. Therefore, for this choice $\ext(\mathsf{z},\cdot)$ can never be an extractor and arbitrarily changing $\ext(\cdot, \mathsf{z})$ to be an arbitrary full rank linear function will not change the overall performance of the extractor.}
.
Without loss of generality, it suffices to assume that $f_A$ is of the form
$f_A(\mathsf{Z},\mathsf{X})=(\mathsf{Z},W(\mathsf{X}))$ for some affine function $W\colon \zo^n \to \zo^t$.
This is because for any arbitrary $f_A$, the information contained in $f_A(\mathsf{Z},\mathsf{X})$
can be obtained from $(\mathsf{Z},W(\mathsf{X}))$ for a suitable choice of $W$.
Let $\cD$ be the uniform distribution on the image of $W$.

Let $\mathsf{K} \leftarrow \zo^n$ be a random variable uniformly distributed over
the kernel of the linear transformation defined by $W$,
and note that it has entropy at least $n-t\geq k$. 
The extractor $\ext$ thus guarantees that $\ext(\mathsf{Z},\mathsf{K})$, for a uniform
and independent seed $\mathsf{Z}$, is $\varepsilon$-close to uniform. 
By averaging, it follows that for at least $1-4\eps$ fraction
of the choices of the seed $\mathsf{z} \in \zo^d$, the distribution
of $\ext(\mathsf{z},\mathsf{K})$ is $(1/4)$-close to uniform. 

\begin{claim}\label{claim}
Let $\mathsf{U}$ be uniformly distributed on $\zo^m$ and
$\mathsf{U}'$ be any affine source that is not uniform on $\zo^m$.
Then, the statistical distance between $\mathsf{U}$ and $\mathsf{U}'$ is
at least $1/2$. 
\end{claim}

Claim \ref{claim} follows from the observation that any affine source $\mathsf{U}'$ that is not uniform on $\zo^m$ will have a support (the set of vectors $\mathsf{u}$ such that $\mathsf{Pr}[\mathsf{U}'=\mathsf{u}]>0$) that is an affine subspace of $\zo^m$ with dimension at most $m-1$.

Continuing with the previous argument, since $\ext$ is a linear function for every seed, 
the distribution of $\ext(\mathsf{z},\mathsf{K})$ for any seed $\mathsf{z}$ is
an affine source.
Therefore, the above claim allows us to conclude that for at least
$1-4\eps$ fraction of the choices of $\mathsf{z}$, the distribution
of $\ext(\mathsf{z},\mathsf{K})$ is \emph{exactly} uniform. Let $\cG \subseteq \zo^d$
be the set of such choices of the seed. Observe that if
$\ext(\mathsf{z},\mathsf{K})$ is uniform for some seed $\mathsf{z}$, then for any affine
translation of $\mathsf{K}$, namely, $\mathsf{K}+\mathsf{v}$ for any $\mathsf{v} \in \zo^n$,
we have that $\ext(\mathsf{z},\mathsf{K}+\mathsf{v})$ is uniform as well. This is
due to the linearity of the extractor.

Recall that our goal is to show that $f_A(\mathsf{Z},\mathsf{X})=(\mathsf{Z},W(\mathsf{X}))$ is $(4\varepsilon)$-close to $\mathsf{U}_d \times \cD$.
The distribution $(\mathsf{Z},W(\mathsf{X}))$ is obtained as $(\mathsf{U}_d,W(\mathsf{U}_n))|(\mathsf{Ext}(\mathsf{U}_d,\mathsf{U}_n)=\mathsf{m})$. For the rest of the proof, we first find out the distribution $(\mathsf{U}_d,W(\mathsf{U}_n))|(\mathsf{Ext}(\mathsf{U}_d,\mathsf{U}_n)=\mathsf{m},\mathsf{U}_d=\mathsf{z})$ for a seed $\mathsf{z}\in\cG$ (Proposition \ref{prop:WSX}) and then take the convex combination over the uniform seed to obtain $(\mathsf{Z},W(\mathsf{X}))$ (Proposition \ref{prop:SWX}). The argument starts with a uniform message $\mathsf{M}\stackrel{\$}{\leftarrow} \zo^m$ instead of a particular message $\mathsf{m}\in \zo^m$. We define a new set of simplified notations. Let $\mathsf{Z}\stackrel{\$}{\leftarrow} \zo^d$ be an independent and uniform seed. Let $(\mathsf{Z},\mathsf{Y})$ be the pre-image of $\mathsf{M}$ and $(\mathsf{Z},\mathsf{W})\colon=f_A(\mathsf{Z},\mathsf{Y})$. Proposition \ref{prop:WSX} and \ref{prop:SWX} are stated in terms of the triple $(\mathsf{M},\mathsf{Z},\mathsf{W})$.

\begin{prop} \label{prop:WSX}
Let $\mathsf{z} \in \cG$ and consider any $\mathsf{m} \in \zo^m$.
Then, the conditional distribution of $\mathsf{W}|(\mathsf{Z}=\mathsf{z}, \mathsf{M}=\mathsf{m})$ 
is exactly $\cD$. 
\end{prop}

To prove Proposition \ref{prop:WSX}, note that the distribution of $(\mathsf{Z},\mathsf{Y})$ is uniform on $\zo^{d+n}$. 
Now, fix any $\mathsf{z}\in \cG$ and
let $\mathsf{w} \in \zo^t$ be any element in the image of $W(\cdot)$.
Since the conditional distribution $\mathsf{Y}|(\mathsf{Z}=\mathsf{z})$ is uniform over $\zo^n$, further conditioning on $W(\mathsf{Y})=\mathsf{w}$ yields that
$\mathsf{Y}|(\mathsf{Z}=\mathsf{z},\mathsf{W}=\mathsf{w})$ is uniform 
over a translation of the kernel of $W(\cdot)$. By the assumption $\mathsf{z} \in \cG$ and recalling
$\mathsf{M}=\ext(\mathsf{Z},\mathsf{Y})$, we therefore
know that the extractor output is exactly uniform over $\zo^m$. That is, $\mathsf{M}|(\mathsf{Z}=\mathsf{z}, \mathsf{W}=\mathsf{w})$
is exactly uniform over $\zo^m$ and hence in this case $\mathsf{M}$ and $\mathsf{W}$ are independent. On the other hand, the distribution of $(\mathsf{Z},\mathsf{W})$ is exactly $\mathsf{U}_d \times \cD$, since the map $W(\cdot)$ is linear.
In particular, for any $\mathsf{z} \in \zo^d$, the conditional distribution $\mathsf{W}|(\mathsf{Z}=\mathsf{z})$ is exactly $\cD$.
This together with the fact that $\mathsf{M}$ and $\mathsf{W}$ are independent yield that the conditional
distribution of $(\mathsf{M},\mathsf{W})|(\mathsf{Z}=\mathsf{z})$ is exactly $\mathsf{U}_m \times \cD$.
We have therefore proved Proposition \ref{prop:WSX}.


\begin{prop}\label{prop:SWX}
For any $\mathsf{m} \in \zo^m$, the conditional distribution of $(\mathsf{Z},\mathsf{W})|(\mathsf{M}=\mathsf{m})$
is $(4\eps)$-close to $\mathsf{U}_d \times \cD$.
\end{prop}

To prove Proposition \ref{prop:SWX}, it suffices to note that the distribution of $(\mathsf{Z},\mathsf{W})|(\mathsf{M}=\mathsf{m})$ is a convex
combination of the distributions $(\mathsf{Z},\mathsf{W})|(\mathsf{M}=\mathsf{m},\mathsf{Z}=\mathsf{z})$ and then
use the result of Proposition~\ref{prop:WSX} along with the
fact that $\Pr[\mathsf{Z}\notin \cG]\leq 4\varepsilon$. A detailed derivation follows.

Recall that for any $\mathsf{z} \in \zo^d$, the conditional distribution
of $\mathsf{W}|(\mathsf{Z}=\mathsf{z})$ is exactly $\cD$ (since $\mathsf{Y}|(\mathsf{Z}=\mathsf{z})$ is uniform over
$\zo^n$). Consider any event $\cE \subseteq \zo^{d+t}$ and
let $p := \Pr[(\mathsf{Z},\mathsf{W}) \in \cE]$. Since $\mathsf{Z}$ and $\mathsf{W}$ are independent,
we have that
\[
p = 2^{-d} \sum_{(\mathsf{z},\mathsf{w}) \in \cE} \cD(\mathsf{w}), 
\]
where $\cD(\mathsf{w})$ denotes the probability assigned to the outcome $\mathsf{w}$ by $\cD$.
On the other hand, we shall write down the same probability in the conditional
probability space $\mathsf{M}=\mathsf{m}$ and show that it is different from $p$ by at most $4\eps$,
concluding the claim on the statistical distance. 
We have
\begin{align*}
\Pr[(\mathsf{Z},\mathsf{W}) \in \cE|\mathsf{M}=\mathsf{m}] &= \sum_{(\mathsf{z},\mathsf{w}) \in \cE} \Pr[\mathsf{Z}=\mathsf{z},\mathsf{W}=\mathsf{w}|\mathsf{M}=\mathsf{m}] \\
&= \sum_{(\mathsf{z},\mathsf{w}) \in \cE, \mathsf{z} \in \cG} \Pr[\mathsf{Z}=\mathsf{z},\mathsf{W}=\mathsf{w}|\mathsf{M}=\mathsf{m}] \\
&\ \ +
\sum_{(\mathsf{z},\mathsf{w}) \in \cE, \mathsf{z} \notin \cG} \Pr[\mathsf{Z}=\mathsf{z},\mathsf{W}=\mathsf{w}|\mathsf{M}=\mathsf{m}].
\end{align*}
Note that
\begin{align*}
\eta := \sum_{(\mathsf{z},\mathsf{w}) \in \cE, \mathsf{z} \notin \cG} \Pr[\mathsf{Z}=\mathsf{z},\mathsf{W}=\mathsf{w}|\mathsf{M}=\mathsf{m}] \leq
\Pr[\mathsf{Z} \notin \cG|\mathsf{M}=\mathsf{m}] \leq 4\eps,
\end{align*}
since $\mathsf{M}$ and $\mathsf{Z}$ are independent. 
Therefore, 
\begin{align}
\Pr[(\mathsf{Z},\mathsf{W}) \in \cE|\mathsf{M}=\mathsf{m}] &= \sum_{(\mathsf{z},\mathsf{w}) \in \cE, \mathsf{z} \in \cG} \Pr[\mathsf{Z}=\mathsf{z},\mathsf{W}=\mathsf{w}|\mathsf{M}=\mathsf{m}] + \eta \nonumber \\
&= 2^{-d} \sum_{(\mathsf{z},\mathsf{w}) \in \cE, \mathsf{z} \in \cG} \Pr[\mathsf{W}=\mathsf{w}|\mathsf{M}=\mathsf{m},\mathsf{Z}=\mathsf{z}] +\eta \label{eqn:prop:WSX:ind} \\
&= 2^{-d} \sum_{(\mathsf{z},\mathsf{w}) \in \cE, \mathsf{z} \in \cG} \cD(\mathsf{w}) +\eta \label{eqn:prop:WSX} \\
&= 2^{-d} \Big(\sum_{(\mathsf{z},\mathsf{w}) \in \cE} \cD(\mathsf{w})-\sum_{(\mathsf{z},\mathsf{w}) \in \cE, \mathsf{z} \notin \cG} \cD(\mathsf{w})\Big) +\eta \nonumber
\end{align}
where \eqref{eqn:prop:WSX:ind} uses the independence of $\mathsf{W}$ and $\mathsf{Z}$ and
\eqref{eqn:prop:WSX} follows from Proposition~\ref{prop:WSX}.
Observe that
\[
\eta' := 2^{-d}\sum_{(\mathsf{z},\mathsf{w}) \in \cE, \mathsf{z} \notin \cG} \cD(\mathsf{w}) = 2^{-d}\sum_{\mathsf{z} \notin \cG} 
\sum_{\substack{\mathsf{w}\colon \\ (\mathsf{z},\mathsf{w})\in \cE}} \cD(\mathsf{w})
\leq 2^{-d}(2^d-|\cG|) \leq 4 \eps.
\]
Therefore,
\begin{align*}
\Pr[(\mathsf{Z},\mathsf{W}) \in \cE|\mathsf{M}=\mathsf{m}] &= p+\eta-\eta' = p \pm 4\eps = \Pr[(\mathsf{Z},\mathsf{W}) \in \cE] \pm 4\eps,
\end{align*}
since $0 \leq \eta \leq 4\varepsilon$ and $0 \leq \eta' \leq 4\varepsilon$.
We have therefore proved Proposition \ref{prop:SWX}.
\end{proof}

With Lemma \ref{th: extractor property} at hand, we are now at a good position to prove Theorem \ref{th: wiretap}. 

\begin{proof}[Proof of Theorem \ref{th: wiretap}]
The reconstruction from $r$ shares follows trivially from the definition of stochastic erasure correcting code. We now prove the privacy.

The sharing algorithm of the SSS (before applying the stochastic affine code) takes a secret, which is a particular extractor output $\mathsf{s}\in\{0,1\}^\ell$, and uniformly samples a seed $\mathsf{z}\in\{0,1\}^d$ of $\mathsf{Ext}$ before uniformly finds an $\mathsf{x}\in\{0,1\}^n$ such that $\mathsf{Ext}(\mathsf{z},\mathsf{x})=\mathsf{s}$. This process of obtaining $(\mathsf{z},\mathsf{x})$ is the same as sampling $(\mathsf{U}_d,\mathsf{U}_n)\stackrel{\$}{\leftarrow}\{0,1\}^{d+n}$ and then restrict to $\mathsf{Ext}(\mathsf{U}_d,\mathsf{U}_n)=\mathsf{s}$. We define the random variable tuple 
\begin{equation}\label{eq: pre-image}
(\mathsf{Z},\mathsf{X}):=(\mathsf{U}_d,\mathsf{U}_n)|\left(\mathsf{Ext}(\mathsf{U}_d,\mathsf{U}_n)=\mathsf{s}\right)
\end{equation} 
and refer to it as the pre-image of $\mathsf{s}$.

Let $\Pi_A:\{0,1\}^N\rightarrow\{0,1\}^t$ be the projection function that maps a share vector to the $t$ shares with index set $A\subset[N]$ chosen by the non-adaptive adversary. Observe that the combination $(\Pi_A\circ\mathsf{SA\mbox{-}ECCenc}):\{0,1\}^{d+n}\rightarrow\{0,1\}^t$ (for any fixed randomness $\mathsf{r}$ of $\mathsf{SA\mbox{-}ECCenc}$) is an affine function. 
So the view of the adversary is simply the output of the affine function $f_A=(\Pi_A\circ\mathsf{SA\mbox{-}ECCenc})$ applied to the random variable tuple $(\mathsf{Z},\mathsf{X})$ defined in (\ref{eq: pre-image}). 


We can now formulate the privacy of the SSS in this context. We want to prove that the statistical distance of the views of the adversary for a pair of secrets $\mathsf{s}$ and $\mathsf{s}'$ can be made arbitrarily small. The views of the adversary are the outputs of the affine function $f_A$ with inputs $(\mathsf{Z},\mathsf{X})$ and $(\mathsf{Z}',\mathsf{X}')$ for the secret $\mathsf{s}$ and $\mathsf{s}'$, respectively. 
According to Lemma \ref{th: extractor property}, we then have that the privacy error is $8\times\frac{\varepsilon}{8}=\varepsilon$.
\end{proof}

\remove{
\begin{remark}\label{rmk: seed attack}
Note that the same argument cannot be made if the set $A$ is chosen according to the seed $\mathsf{z}$, in which case the source (induced by the set $A$) of $\mathsf{Ext}$ depends on its seed $\mathsf{z}$, violating the definition of a seeded extractor. To see this, recall that the source of $\mathsf{Ext}$ is uniform $n$-bit conditioned on the output of the affine function $f_A=(\Pi_A\circ\mathsf{SA\mbox{-}ECCenc})$. The choice of $A$ then can affect this source.
In a real life adaptive attack, the adversary can first spend some reading budget on figuring out the value of the seed $\mathsf{z}$ of $\mathsf{Ext}$, and then decide the rest of the reading positions according to the seed value. Finding out the value of $\mathsf{z}$ is possible because the encoding randomness of the $\mathsf{SA\mbox{-}ECCenc}$ is included as part of the share vector. The adversary can choose to read the encoding randomness $\mathsf{r}$ of the $\mathsf{SA\mbox{-}ECCenc}$. After $\mathsf{r}$ is revealed, the $\mathsf{SA\mbox{-}ECCenc}$ is just a deterministic affine function. It is then possible to recover information about $\mathsf{z}$ by solving linear equations.
\end{remark}
}


We now analyze the coding rate of the $(\varepsilon,\delta)$-SSS with relative threshold pair $(\frac{t}{N},\frac{r}{N})$ constructed in Theorem \ref{th: wiretap} when instantiated with the $\mathsf{SA\mbox{-}ECC}$ from Lemma \ref{th: SA-ECC} and the $\mathsf{Ext}$ from Lemma \ref{lem: Ext}. The secret length is $\ell=n-t-O(d)$, where the seed length is $d=O(\log^3(2n/\varepsilon))$. The $\mathsf{SA\mbox{-}ECC}$ encodes $d+n$ bits to $N$ bits and with coding rate $R_{ECC}=\rho-\xi$ for a small $\xi$ determined by $\delta$ (satisfying the relation $\delta=\exp(-\Omega(\xi^2N/\log^2 N))$ according to Lemma \ref{th: SA-ECC}). We then have $n=N(\rho-\xi)-d$, resulting in the coding rate   
\begin{align*}
R=\frac{\ell}{N}=\frac{n-t-O(d)}{N}=\frac{N(\rho-\xi)-t-O(d)}{N}=\rho-\tau-(\xi+\frac{O(d)}{N})
=\rho-\tau-o(1).
\end{align*}

\begin{corollary}\label{cor: non-adaptive} For any $0\leq\tau<\rho\leq1$, there is an explicit construction of non-adaptive $(\varepsilon,\delta)$-SSS with relative threshold pair $(\tau,\rho)$ achieving coding rate $\rho-\tau-o(1)$, where $\varepsilon$ and $\delta$ are both negligible. 
\end{corollary}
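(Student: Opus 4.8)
The plan is to instantiate the generic construction of Theorem~\ref{th: wiretap} with concrete explicit choices of its two ingredients---the linear strong seeded extractor of Lemma~\ref{lem: Ext} (the Raz--Reingold--Vadhan improvement of Trevisan's extractor) and the capacity-achieving stochastic affine erasure-correcting code of Lemma~\ref{th: SA-ECC}---and then to tune the internal parameters $(\varepsilon,\xi)$ so that both error quantities stay negligible while the rate penalty vanishes.

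First I would fix the target relative thresholds $(\tau,\rho)$ and set $t=\tau N$, $r=\rho N$. Instantiate $\mathsf{SA\mbox{-}ECC}$ from Lemma~\ref{th: SA-ECC} with erasure fraction $p=1-\rho$ and slack $\xi=\xi(N)$: this gives an explicit stochastic affine code of rate $R_{ECC}=1-p-\xi=\rho-\xi$ encoding $d+n$ bits into $N$ bits, correcting any $N-r=(1-\rho)N$ erasures with failure probability $\delta=\exp(-\Omega(\xi^2N/\log^2N))$; hence $n=N(\rho-\xi)-d$. Next, instantiate $\mathsf{Ext}$ from Lemma~\ref{lem: Ext} as a linear strong seeded $(n-t,\varepsilon/8)$-extractor $\{0,1\}^d\times\{0,1\}^n\to\{0,1\}^\ell$, so that $d=O(\log^3(n/\varepsilon))$ and $\ell=(n-t)-O(d)$. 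The hypothesis of Theorem~\ref{th: wiretap} that the affine leakage function have output length $\le n-k$ holds with equality here, since $k=n-t$ and the non-adaptive adversary reads $t$ shares; and $n-t=N(\rho-\tau-\xi)-d>0$ for all large $N$. The inverter $\mathsf{Ext}^{-1}(\mathsf{z},\cdot)$ demanded by Theorem~\ref{th: wiretap} is pure linear algebra: for each seed $\mathsf{z}$ the map $\mathsf{Ext}(\mathsf{z},\cdot)$ is linear and (WLOG, as in the proof of Lemma~\ref{th: extractor property}) surjective, so $\mathsf{Ext}(\mathsf{z},\cdot)^{-1}(\mathsf{s})$ is an affine subspace of dimension $n-\ell$, from which one samples uniformly in polynomial time using $n-\ell$ random bits; picking $\mathsf{s}$ uniformly and then a uniform preimage yields a uniform element of $\{0,1\}^n$, so this is a perfect ($0$-)inverter in the sense of Definition~\ref{def: invertible}. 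By Theorem~\ref{th: wiretap} the resulting pair $(\mathsf{Share},\mathsf{Recst})$ is then a non-adaptive $(\varepsilon,\delta)$-SSS with threshold pair $(\tau N,\rho N)$, and it is explicit because both building blocks are.

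It remains to read off the rate and to choose $(\varepsilon,\xi)$. Substituting $n=N(\rho-\xi)-d$ into $\ell=(n-t)-O(d)$ gives
\[
\frac{\ell}{N}=\frac{N(\rho-\xi)-d-\tau N-O(d)}{N}=\rho-\tau-\xi-\frac{O(d)}{N}.
\]
I would take, say, $\varepsilon(N)=2^{-\log^2 N}$ and $\xi(N)=1/\log N$. Then $\varepsilon$ is negligible, $\delta=\exp(-\Omega(N/\log^4N))$ is negligible, and $d=O(\log^3(n/\varepsilon))=O(\log^6N)=o(N)$, so both $\xi$ and $O(d)/N$ are $o(1)$ and hence $\ell/N\to\rho-\tau$; i.e.\ the coding rate is $\rho-\tau-o(1)$, matching the upper bound of Lemma~\ref{lem: upperbound}.

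The only point needing care---rather than a true obstacle, since Theorem~\ref{th: wiretap} and Lemmas~\ref{lem: Ext},~\ref{th: SA-ECC} carry the technical weight---is the joint calibration of $\varepsilon$ and $\xi$: one needs $\log(1/\varepsilon)=o(N^{1/3})$ so that the seed length $d=O(\log^3(n/\varepsilon))$, and therefore the total rate penalty $\Theta(d)/N$, is $o(1)$, while simultaneously keeping $\varepsilon$ and $\delta=\exp(-\Omega(\xi^2N/\log^2N))$ negligible; any pair such as the one above works, and one should additionally verify the mild feasibility conditions $d+n=NR_{ECC}$ and $n-t>0$ for all sufficiently large $N$.
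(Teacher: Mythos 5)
Your proof is correct and follows essentially the same route as the paper: instantiate Theorem~\ref{th: wiretap} with the linear strong seeded extractor of Lemma~\ref{lem: Ext} and the capacity-achieving SA-ECC of Lemma~\ref{th: SA-ECC}, substitute $n=N(\rho-\xi)-d$ into $\ell=n-t-O(d)$, and read off the rate $\rho-\tau-(\xi+O(d)/N)$. Your added explicit calibration ($\varepsilon=2^{-\log^2 N}$, $\xi=1/\log N$) and the remark that the inverter of a linear seeded extractor is just uniform sampling from an affine preimage are sound details that the paper leaves implicit rather than a different argument.
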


The binary SSS obtained in Corollary \ref{cor: non-adaptive} is asymptotically optimal as it achieves the upper bound in Lemma \ref{lem: upperbound}.

\section{Secret sharing against adaptive adversaries}\label{sec: construction}


In this section, we will describe our construction which achieves privacy against adaptive adversaries, using seedless affine extractors. We start with the specific extraction property needed from our affine extractors.

\begin{definition}\label{def: APExt} An affine extractor $\mathsf{AExt}\colon\{0,1\}^n\rightarrow\{0,1\}^m$ 
is called $(k,\varepsilon)$-almost perfect if for any affine $(n,k)$-source $\mathsf{X}$,
\begin{align*}
\left |\mathsf{Pr}[\mathsf{AExt}(\mathsf{X})=\mathsf{y}]-\frac{1}{2^m}\right |\leq 2^{-m}\cdot\varepsilon, \mbox{ for any }\mathsf{y}\in \{0,1\}^m.
\end{align*} 
\end{definition}
Almost perfect property can be trivially achieved by requiring an exponentially (in $m$) small error in statistical distance, using the relation between $\ell_\infty$-norm and $\ell_1$-norm.

\begin{theorem}\label{th: generic}
Let $\mathsf{AExt}\colon\{0,1\}^n\rightarrow\{0,1\}^\ell$ be an invertible 
$(n-t,\frac{\varepsilon}{2})$-almost perfect affine extractor and $\mathsf{AExt^{-1}}\colon\{0,1\}^\ell\times\mathcal{R}_1\rightarrow\{0,1\}^n$ be its $v$-inverter that maps an $\mathsf{s}\in\{0,1\}^\ell$ to one of its pre-images chosen uniformly at random. 
Let $(\mathsf{SA\mbox{-}ECCenc},\mathsf{SA\mbox{-}ECCdec})$ be a stochastic affine-erasure correcting code with encoder $\mathsf{SA\mbox{-}ECCenc}\colon\{0,1\}^n\times\mathcal{R}_2\rightarrow\{0,1\}^N$ that tolerates $N-r$ erasures and decodes with success probability at least $1-\delta$. 
Then the $(\mathsf{Share},\mathsf{Recst})$ defined as follows is an adaptive $(\varepsilon+v,\delta)$-SSS with threshold pair $(t,r)$.
\begin{align*}
\left\{
\begin{array}{ll}
\mathsf{Share}(\mathsf{s})&=\mathsf{SA\mbox{-}ECCenc(AExt^{-1}}(\mathsf{s}));\\
\mathsf{Recst}(\tilde{\mathsf{v}})&=\mathsf{AExt(SA\mbox{-}ECCdec}(\tilde{\mathsf{v}})),
\end{array}
\right.
\end{align*}
where $\tilde{\mathsf{v}}$ denotes an incomplete version of a share vector $\mathsf{v}\in\{0,1\}^N$ with some of its components replaced by erasure symbols.
\end{theorem}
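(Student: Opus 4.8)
Reconstruction will be immediate: from any $r=\rho N$ correct shares the erasure pattern touches at most $N-r$ positions, so by Lemma~\ref{th: SA-ECC} the decoder $\mathsf{SA\mbox{-}ECCdec}$ returns the encoded string $\mathsf{x}=\mathsf{AExt}^{-1}(\mathsf{s})$ except with probability $\delta$, and then $\mathsf{AExt}(\mathsf{x})=\mathsf{AExt}(\mathsf{AExt}^{-1}(\mathsf{s}))=\mathsf{s}$ by the inversion clause of Definition~\ref{def: invertible}. For the adaptive privacy bound~\eqref{eq: adaptive privacy} I would first pass to an analysis world in which the secret is a uniformly random $\mathsf{S}\leftarrow\{0,1\}^\ell$; put $\mathsf{X}=\mathsf{AExt}^{-1}(\mathsf{S},\mathsf{R}_1)$, $\mathsf{C}=\mathsf{SA\mbox{-}ECCenc}(\mathsf{X},\mathsf{R}_2)$, and let $\mathsf{L}$ be the transcript of the (without loss of generality deterministic) adaptive adversary interacting with $\mathcal{O}_t(\mathsf{C})$. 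Since $\mathsf{L}\mid(\mathsf{S}=\mathsf{s}_b)$ is exactly the view of~\eqref{eq: adaptive privacy} for the real secret $\mathsf{s}_b$, it suffices to prove the \emph{multiplicative} bound $\Pr[\mathsf{S}=\mathsf{s}\mid\mathsf{L}=\ell_0]\in 2^{-\ell}(1\pm\varepsilon/2)$ for every $\mathsf{s}$ and every reachable $\ell_0$: Bayes' rule (with $\mathsf{S}$ uniform a priori) flips this to $\Pr[\mathsf{L}=\ell_0\mid\mathsf{S}=\mathsf{s}]\in(1\pm\varepsilon/2)\Pr[\mathsf{L}=\ell_0]$, and the triangle inequality through the unconditional $\mathsf{L}$ gives $\mathsf{SD}(\mathsf{L}\mid\mathsf{S}=\mathsf{s}_0;\mathsf{L}\mid\mathsf{S}=\mathsf{s}_1)\le\varepsilon/2$. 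A multiplicative/$\ell_\infty$ control is essential here: only a bound on $\Pr[\mathsf{S}=\mathsf{s}\mid\text{leakage}]$ that is uniform over $\mathsf{s}$ survives the Bayesian inversion into worst-case-message secrecy, whereas an ordinary statistical-distance ($\ell_1$) extractor would give only average-message secrecy --- precisely the gap between wiretap-II codes and secret sharing, and the reason Definition~\ref{def: APExt} asks for the almost-perfect property.

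The crux is the multiplicative bound, where the affine structure meets the harmlessness of adaptivity. Condition first on $\mathsf{R}_2=\mathsf{r}_2$: by Definition~\ref{def: affine} the map $\mathsf{SA\mbox{-}ECCenc}(\cdot,\mathsf{r}_2)$ is affine, so each bit of $\mathsf{C}$ is an affine function of $\mathsf{X}$. For a deterministic adversary, the set of inputs $\mathsf{x}$ producing a fixed transcript $\ell_0$ is exactly the solution set of the $\le t$ affine equations ``the queried bits of $\mathsf{SA\mbox{-}ECCenc}(\mathsf{x},\mathsf{r}_2)$ equal the recorded answers'' (the query positions being themselves determined by $\ell_0$), hence an affine subspace $V_{\ell_0,\mathsf{r}_2}\subseteq\{0,1\}^n$ of codimension $\le t$ (or empty). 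This is exactly where a \emph{seedless} affine extractor buys adaptive security: one conditions on adaptively chosen affine functions of the source and the conditioned source is still affine, of min-entropy $\ge n-t$; contrast Theorem~\ref{th: wiretap}, where an adaptive adversary could read the SA-ECC randomness first and then choose its remaining queries as a function of the recovered extractor seed, destroying seed--source independence. Since $\mathsf{X}$ is independent of $\mathsf{R}_2$ and --- in the idealized world where $\mathsf{AExt}^{-1}(\mathsf{s},\cdot)$ returns a uniformly random preimage --- distributed as $\mathsf{U}_n$, the conditional law of $\mathsf{X}$ given $(\mathsf{R}_2=\mathsf{r}_2,\mathsf{L}=\ell_0)$ is uniform on $V_{\ell_0,\mathsf{r}_2}$, an affine $(n,n-t)$-source; applying the $(n-t,\varepsilon/2)$-almost-perfect guarantee to $\mathsf{S}=\mathsf{AExt}(\mathsf{X})$ gives $\Pr[\mathsf{S}=\mathsf{s}\mid\mathsf{R}_2=\mathsf{r}_2,\mathsf{L}=\ell_0]\in 2^{-\ell}(1\pm\varepsilon/2)$ for every $\mathsf{s}$, and averaging over $\mathsf{r}_2$ preserves it. Structurally this mirrors Lemma~\ref{th: extractor property}, now carried out in the $\ell_\infty$ metric and without a seed.

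Finally I would remove the idealization: the real $\mathsf{Share}$ differs from the idealized one only in replacing ``a uniformly random preimage of $\mathsf{s}$'' by $\mathsf{AExt}^{-1}(\mathsf{s},\mathsf{R}_1)$, and since the view is a fixed (randomness-independent) function of that preimage, the uniformity clause of Definition~\ref{def: invertible} bounds the resulting shift of the view distribution by $v$; thus the two views are within $\varepsilon/2+v\le\varepsilon+v$, which with the reconstruction bound makes $(\mathsf{Share},\mathsf{Recst})$ an adaptive $(\varepsilon+v,\delta)$-SSS with threshold pair $(t,r)$. The step I expect to be the real obstacle is the second one: extracting the affine-subspace description of the adaptive decision tree together with its codimension bound, and recognizing that it is precisely the $\ell_\infty$ (``almost perfect'') strength of the extractor that makes the ensuing Bayesian inversion deliver worst-case-message rather than merely average-message privacy. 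The inverter bookkeeping in the last step is routine once the idealized world is set up.
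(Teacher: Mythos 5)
Your proof is correct and follows essentially the same route as the paper: reconstruction is immediate from the SA-ECC's erasure tolerance plus the inversion clause; for privacy both you and the paper pass to the idealization where $\mathsf{X}$ is a uniform source and $\mathsf{Share}(\mathsf{s})$ has the law of $\mathsf{SA\mbox{-}ECCenc}(\mathsf{X})$ conditioned on $\mathsf{AExt}(\mathsf{X})=\mathsf{s}$, apply Bayes' rule, invoke the $\ell_\infty$ (``almost-perfect'') guarantee on the affine-source-conditioned output, go through a secret-independent reference distribution (your unconditioned $\mathsf{L}$ plays exactly the role of the paper's $\mathsf{W}$) by the triangle inequality, and add $v$ at the end for the non-ideal inverter. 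The one place you go beyond the paper's exposition --- and it is a genuine clarification rather than a new idea --- is in step two: the paper just asserts that conditioning uniform $\mathsf{X}$ on ``at most $t$ linear equations'' gives an affine $(n,n-t)$-source, whereas you make explicit that for a (WLOG deterministic) adaptive adversary and fixed $\mathsf{R}_2=\mathsf{r}_2$, a complete transcript $\ell_0$ determines both the queried positions and the answers, so the preimage $V_{\ell_0,\mathsf{r}_2}$ of $\ell_0$ under the decision tree is an affine subspace of codimension at most $t$, and adaptivity is absorbed for free; this is exactly the point that would fail in the seeded construction of Theorem~\ref{th: wiretap} and is worth spelling out. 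Your constants are also a hair tighter (you get $\varepsilon/2+v$ before rounding up to $\varepsilon+v$), but this is inconsequential. The only loose thread, which the paper shares with you, is the handling of the $v\neq0$ case: the uniformity clause of Definition~\ref{def: invertible} speaks of $\mathsf{AExt}^{-1}(\mathsf{U}_\ell,\mathsf{U}_{r})$ being close to $\mathsf{U}_n$, not of each fibre being sampled uniformly, so the bookkeeping that justifies ``add $v$'' deserves a sentence; but since the theorem statement itself stipulates that the inverter samples uniformly from each preimage, and the instantiation in Lemma~\ref{lem:affineExt} does so, this is a presentation issue in both proofs rather than a gap in yours.
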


\begin{proof} The $(r,\delta)$-reconstructability 
follows directly from the erasure correcting capability of the $\mathsf{SA\mbox{-}ECC}$. For any 
$\tilde{\mathsf{v}}$ with at most $N-r$ erasure symbols and the rest of its components consistent with a valid codeword $\mathsf{v}\in\{0,1\}^N$, the $\mathsf{SA\mbox{-}ECC}$ decoder identifies the unique codeword $\mathsf{v}$ with probability $1-\delta$ over the encoder randomness. The corresponding $\mathsf{SA\mbox{-}ECC}$ message of $\mathsf{v}$ is then inputted to $\mathsf{AExt}$ and the original secret $\mathsf{s}$ is reconstructed with the same probability. 

We next prove the $(t,\varepsilon)$-privacy. Without loss of generality, we first assume the inverter of the affine extractor is perfect, namely, $v=0$. When $v$ is negligible but not equal to zero, the overall privacy error will increase slightly, but still remain negligible.
For any $\mathsf{r}\in\mathcal{R}_2$, the affine encoder of $\mathsf{SA\mbox{-}ECC}$ is characterised by a matrix $G_\mathsf{r}\in\{0,1\}^{n\times N}$ and an offset $\Delta_\mathsf{r}$. For $n$ unknowns $\mathbf{x}=(x_1,\ldots,x_n)$, we have
\begin{align*}
\mathsf{SA\mbox{-}ECCenc(\mathbf{x})}=\mathbf{x}G_\mathsf{r}+\Delta_\mathsf{r}=(\mathbf{x}G_1,\ldots,\mathbf{x}G_N)+\Delta_\mathsf{r},
\end{align*}
where $G_i=(g_{1,i},\ldots,g_{n,i})^T$ (here the subscript ``$_\mathsf{r}$'' is omitted to avoid double subscripts) denotes the $i$th column of $G_\mathsf{r}$, $i=1,\ldots, N$. This means that knowing a component $c_i$ of the $\mathsf{SA\mbox{-}ECC}$ codeword is equivalent to obtaining a linear equation $c_i\oplus\Delta_i=\mathbf{x}G_i=g_{1,i}x_1+\cdots+g_{n,i}x_n$  about the $n$ unknowns $x_1,\ldots,x_n$, where $\Delta_i$ (again, the subscript ``$_\mathsf{r}$'' is omitted) denotes the $i$th component of $\Delta_\mathsf{r}$. 
Now, we investigate the distribution of $\mathsf{View}_\mathcal{A}^{\mathcal{O}_t(\mathsf{Share}(\mathsf{s}))}$ for any secret $\mathsf{s}\in\{0,1\}^\ell$ by finding the probability $\mathsf{Pr}[\mathsf{View}_\mathcal{A}^{\mathcal{O}_t(\mathsf{Share}(\mathsf{s}))}=\mathsf{w}]$ for arbitrary $\mathsf{w}$. We then have 
\begin{align*}
\mathsf{Pr}[\mathsf{View}_\mathcal{A}^{\mathcal{O}_t(\mathsf{Share}(\mathsf{s}))}=\mathsf{w}]&=\mathsf{Pr}[\mathsf{View}_\mathcal{A}^{\mathcal{O}_t(\mathsf{SA\mbox{-}ECCenc}(\mathsf{X}))}=\mathsf{w}|\mathsf{AExt}(\mathsf{X})=\mathsf{s}]\\
                                                                                                                        &=\frac{\mathsf{Pr}[\mathsf{AExt}(\mathsf{X})=\mathsf{s}|\mathsf{View}_\mathcal{A}^{\mathsf{SA\mbox{-}ECCenc}(\mathsf{X})}=\mathsf{w}]\cdot\mathsf{Pr}[\mathsf{View}_\mathcal{A}^{\mathsf{SA\mbox{-}ECCenc}(\mathsf{X})}=\mathsf{w}]}{\mathsf{Pr}[\mathsf{AExt}(\mathsf{X})=\mathsf{s}]}\\
                                                                                                                        &\stackrel{(i)}{=}\frac{(1\pm\frac{\varepsilon}{2})2^{-\ell}\cdot\mathsf{Pr}[\mathsf{View}_\mathcal{A}^{\mathsf{SA\mbox{-}ECCenc}(\mathsf{X})}=\mathsf{w}]}{\mathsf{Pr}[\mathsf{AExt}(\mathsf{X})=\mathsf{s}]}\\
                                                                                                                        &\stackrel{(ii)}{=}\frac{(1\pm\frac{\varepsilon}{2})2^{-\ell}\cdot \frac{2^{n-\mathrm{rank}(\mathcal{A})}}{2^n}}{2^{-\ell}}\\
                                                                                               &=(1\pm\frac{\varepsilon}{2})\cdot 2^{-\mathrm{rank}(\mathcal{A})},            
\end{align*}
where notations $\mathsf{X}$, $\pm$, $\mathrm{rank}(\mathcal{A})$ and $(i),(ii)$ are explained as follows.
In above, we first use the fact that $\mathsf{Pr}[\mathsf{View}_\mathcal{A}^{\mathcal{O}_t(\mathsf{Share}(\mathsf{s}))}=\mathsf{w}]$ can be seen as the probability of uniformly selecting $\mathsf{X}$ from $\{0,1\}^n$, with the condition that $\mathsf{AExt}(\mathsf{X})=\mathsf{s}$.  This is true because the sets  $\mathsf{AExt}^{-1}(\mathsf{s})$ for all $\mathsf{s}$, partition $\{0,1\}^n$ and the rest follows from Definition \ref{def: invertible}.
The shorthand ``$y=1\pm\frac{\varepsilon}{2}$'' denotes ``$1-\frac{\varepsilon}{2}\leq y\leq 1+\frac{\varepsilon}{2}$''. The shorthand ``$\mathrm{rank}(\mathcal{A})$'' denotes the rank of the up to $t$ columns of $G$ corresponding to the index set $A$ adaptively chosen by $\mathcal{A}$. The equality $(i)$ follows from the fact that $\mathsf{AExt}$ is an $(n-t,2^{-(\ell+1)}\varepsilon)$-affine extractor and the uniform $\mathsf{X}$ conditioned on at most $t$ linear equations is an affine source with at least $n-t$ bits entropy. The equality $(ii)$ holds if and only if $\mathsf{w}$ is in the set $\{\mathsf{SA\mbox{-}ECCenc}(\mathsf{x})_A:\mathsf{x}\in\{0,1\}^n\}$. Indeed, consider $\mathsf{X}$ as unknowns for equations, the number of solutions to the linear system $\mathsf{SA\mbox{-}ECCenc}(\mathsf{X})_A=\mathsf{w}$ is either $0$ or equal to $2^{n-\mathrm{rank}(\mathcal{A})}$.

The distribution of $\mathsf{View}_\mathcal{A}^{\mathcal{O}_t(\mathsf{Share}(\mathsf{s}))}$ for any secret $\mathsf{s}$ is determined by the quantity $\mathrm{rank}(\mathcal{A})$, which is independent of the secret $\mathsf{s}$. 
Let $\mathsf{W}$ be the uniform distribution over the set $\{\mathsf{SA\mbox{-}ECCenc}(\mathsf{x})_A:\mathsf{x}\in\{0,1\}^n\}$. Then by the triangular inequality, we have 
\begin{align*}
\begin{array}{l}
\mathsf{SD}\left(\mathsf{View}_\mathcal{A}^{\mathcal{O}_t(\mathsf{Share}(\mathsf{s}_0))};\mathsf{View}_\mathcal{A}^{\mathcal{O}_t(\mathsf{Share}(\mathsf{s}_1))}\right)\\
\leq\mathsf{SD}\left(\mathsf{View}_\mathcal{A}^{\mathcal{O}_t(\mathsf{Share}(\mathsf{s}_0))};\mathsf{W}\right)+\mathsf{SD}\left(\mathsf{W};\mathsf{View}_\mathcal{A}^{\mathcal{O}_t(\mathsf{Share}(\mathsf{s}_1))}\right)\\
\leq \frac{\varepsilon}{2}+\frac{\varepsilon}{2}
=\varepsilon.
\end{array}
\end{align*}
When the inverter of the affine extractor is not perfect, the privacy error is upper bounded by $\varepsilon+v$. 
This concludes the privacy proof.
\end{proof}




There are explicit constructions of binary affine extractors that, given a constant fraction of entropy, outputs a constant fraction of random bits with exponentially small error (see Lemma~\ref{lem: affine}). 
There are known methods for constructing an invertible affine extractor $\mathsf{AExt}'$ from any affine extractor $\mathsf{AExt}$ such that the constant fraction output size and exponentially small error properties are preserved. A simple method is to let $\mathsf{AExt}'(\mathsf{U}_n||M):=\mathsf{AExt}(\mathsf{U}_n)\oplus M$ (see Appendix~\ref{apdx:OTP} for a discussion). 
This is summarized in the lemma below.

\begin{lemma} \label{lem:affineExt}
For any $\delta \in (0,1]$, there is an explicit seedless
$(\delta n,\eps)$-almost perfect affine extractor $\mathsf{AExt}\colon \zo^n \to \zo^m$
where $m = \Omega(n)$ and $\eps = \exp(-\Omega(n))$.
Moreover, there is an efficiently computable $\eps$-inverter for the extractor.
\end{lemma}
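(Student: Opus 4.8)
The plan is to establish Lemma~\ref{lem:affineExt} in two moves: first exhibit an off-the-shelf seedless affine extractor with exponentially small \emph{statistical} error, then upgrade it to the almost-perfect ($\ell_\infty$) guarantee and equip it with an efficient inverter, checking that neither step destroys the $m=\Omega(n)$ output length. For the base object I would invoke Lemma~\ref{lem: affine} (Bourgain's extractor, or alternatively Li's \cite{XinLiAffine}): for the constant $\mu=\delta$ it gives an explicit affine $(\delta n,\eps_0)$-extractor $\mathsf{AExt}_0\colon\zo^{n_0}\to\zo^{m_0}$ with $m_0=\Omega(n_0)$ and $\eps_0=2^{-\Omega(n_0)}$. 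By the remark right after Definition~\ref{def: APExt}, an exponentially small $\ell_1$ error already implies the almost-perfect property after truncating the output by a linear number of bits: if $\mathsf{SD}(\mathsf{AExt}_0(\mathsf{X});\mathsf{U}_{m_0})\le 2^{-cn_0}$, then keeping only the first $m=m_0-c'n_0$ output bits for a suitable small constant $c'>0$ makes the per-point deviation $|\Pr[\cdot]-2^{-m}|\le 2^{-m}\eps$ with $\eps=2^{-\Omega(n_0)}$, since averaging the $\ell_1$ mass over $2^m$ points and rescaling by $2^{m}$ only loses a $2^{m}=2^{O(n_0)}$ factor. This is exactly the ``padding and truncation'' idea mentioned in the introduction; it preserves $m=\Omega(n)$ and the affine-source guarantee (a truncation of an affine extractor's output, for input entropy $\delta n$, is still an affine extractor).

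Next I would make the extractor invertible via the one-time-pad trick flagged in the statement and deferred to Appendix~\ref{apdx:OTP}. Define $\mathsf{AExt}'\colon\zo^{n_0}\times\zo^{m}\to\zo^{m}$ by $\mathsf{AExt}'(\mathsf{U}\,\|\,\mathsf{M}):=\mathsf{AExt}(\mathsf{U})\oplus \mathsf{M}$, where $\mathsf{AExt}$ is the truncated almost-perfect extractor from the previous paragraph, so the new input length is $n:=n_0+m=n_0+\Omega(n_0)=\Theta(n_0)$ and the output length is still $m=\Omega(n)$. One checks that on an affine $(n,\delta n)$-source the component $\mathsf{U}$ retains enough affine entropy — here I would use that an affine source on $\zo^{n_0}\times\zo^m$ of dimension $\ge \delta n$ projects to an affine source on the $\mathsf{U}$-coordinates of dimension $\ge \delta n-m$, which can be kept $\ge \delta' n_0$ by choosing $\delta'$ slightly smaller and re-invoking Lemma~\ref{lem: affine} with parameter $\delta'$ — and that XORing with the $\mathsf{M}$-block (even a correlated one, as long as $\mathsf{U}$ is conditionally affine-uniform of high dimension) keeps the output $\eps$-close to uniform in $\ell_\infty$. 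The inverter is immediate: to hit a target $\mathsf{s}\in\zo^m$, sample $\mathsf{U}\leftarrow\mathsf{U}_{n_0}$ uniformly and set $\mathsf{M}:=\mathsf{AExt}(\mathsf{U})\oplus\mathsf{s}$; this is a perfect (hence $\eps$-, indeed $0$-) inverter running in the evaluation time of $\mathsf{AExt}$, and $\mathsf{AExt}'(\mathsf{U}\,\|\,\mathsf{M})=\mathsf{s}$ by construction, while $(\mathsf{U},\mathsf{M})$ is uniform on the preimage because $\mathsf{U}$ is uniform and $\mathsf{M}$ is then determined. Re-indexing $n_0,n$ so that the input length is exactly $n$ and absorbing constants gives the statement with $m=\Omega(n)$ and $\eps=\exp(-\Omega(n))$.

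I expect the main obstacle to be the bookkeeping around the entropy parameter $\delta$: I need the \emph{final} extractor to work for an arbitrary prescribed fraction $\delta\in(0,1]$ of entropy on inputs of length $n$, whereas the OTP construction spends an $\Omega(n)$-length block on $\mathsf{M}$ and thus effectively shrinks the available relative entropy of the $\mathsf{U}$-part. The clean fix is to run the underlying Bourgain/Li extractor at a strictly smaller constant fraction $\mu=\mu(\delta)>0$ on a proportionally longer $\mathsf{U}$-block, so that even after reserving the $\mathsf{M}$-block and performing the truncation the conditional affine source on $\mathsf{U}$ still has min-entropy rate $\ge\mu$; since Lemma~\ref{lem: affine} holds for \emph{every} constant $\mu\in(0,1]$ with $m=\Omega(n)$ and $\eps=2^{-\Omega(n)}$, all the $\Omega(\cdot)$ constants survive, only their hidden values degrade with $\delta$. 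The only other point requiring care is verifying that conditioning a uniform-on-affine-subspace source on a further affine constraint (the event fixing the $\mathsf{M}$-block, or the erasure-induced constraints in the downstream application) still yields an affine source of the claimed dimension — this is the standard fact that intersections and coordinate projections of affine subspaces are affine, and I would state it as a one-line observation rather than belabor it.
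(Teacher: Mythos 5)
Your proposal is correct and follows essentially the same route as the paper's proof: instantiate Lemma~\ref{lem: affine}, use the one-time-pad trick of Appendix~\ref{apdx:OTP} for invertibility, and truncate the output by a linear amount so that the exponentially small $\ell_1$ error upgrades to the pointwise $\ell_\infty$ (almost-perfect) bound. The only, cosmetic, difference is the order of the last two steps --- the paper applies OTP first and truncates last (keeping invertibility by random-padding the target before inverting), while you truncate first and pad last, which costs you an extra check, which you correctly supply, that the OTP step preserves the $\ell_\infty$ guarantee via conditioning on the pad block and shift/convexity invariance of almost-perfection.
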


\begin{proof}
Let $\mathsf{AExt}\colon \zo^n \to \zo^m$ be Bourgain's affine extractor (Lemma~\ref{lem: affine}) 
for entropy rate
$\mu$, output length $m=\Omega(n)$, and achieving exponentially small error
$\eps=\exp(-\Omega(n))$. Using the one-time pad trick (Appendix~\ref{apdx:OTP}),
we construct an invertible variant achieving output length $m'=\Omega(m)=\Omega(n)$
and exponentially small error. Finally, we simply truncate the output length of the
resulting extractor to $m''=\Omega(m')=\Omega(n)$ bits so that the
closeness to uniformity measured by $\ell_\infty$ norm required for almost-perfect
extraction is satisfied. The truncated extractor is still invertible
since the inverter can simply pad the given input with random bits and
invoke the original inverter function. 
\end{proof}


\remove{
Note that for simplicity, Theorem~\ref{th: generic} assumes that the inverter of $\mathsf{AExt}$ is perfect, namely, $\mathsf{AExt}^{-1}(\mathsf{U}_\ell)=\mathsf{U}_n$. In the case when there is an exponentially small error, as in the case of Lemma~\ref{lem:affineExt}, the privacy error parameter will be slightly affected (increase by $2$ times) due to a hybrid that replaces the ideal $\mathsf{U}_n$ with the real $\mathsf{AExt}^{-1}(\mathsf{U}_\ell)$.
}

It now suffices to instantiate Theorem~\ref{th: generic} with
the explicit construction of SA-ECC and the invertible affine extractor $\mathsf{AExt}$
of Lemma~\ref{lem:affineExt}.
Let $R_{ECC}$ denote the rate of the SA-ECC. 
Then we have $R_{ECC}=\frac{n}{N}$, where $n$ is the input length of the affine extractor $\mathsf{AExt}$ and $N$ is the number of players. The intuition of the construction in Theorem~\ref{th: generic} is that if a uniform secret is shared and conditioning on the revealed shares the secret still has a uniform distribution (being the output of a randomness extractor), then no information is leaked. In fact, the proof of Theorem~\ref{th: generic} above is this intuition made exact, with special care on handling the imperfectness of the affine extractor. So as long as the ``source'' of the affine extractor $\mathsf{AExt}$ has enough entropy, privacy is guaranteed. Here the ``source'' is the distribution $\mathsf{U}_n$ conditioned on the adversary's view, which is the output of a $t$-bit affine function. The ``source'' then is affine and has at least $n - \tau N=n(1-\frac{\tau}{R_{ECC}})$ bits of entropy. Now as long as $\tau<R_{ECC}$, using the $\mathsf{AExt}$ from Lemma \ref{lem: affine} (more precisely, an invertible affine extractor $\mathsf{AExt}':\{0,1\}^{n'}\rightarrow\{0,1\}^\ell$ constructed from $\mathsf{AExt}$) with $\mu=1-\frac{\tau}{R_{ECC}}$, a constant fraction of random bits can be extracted with exponentially small error. This says that privacy is guaranteed for $\tau\in[0,R_{ECC})$. 
The stochastic affine ECC in Lemma~\ref{th: SA-ECC} asymptotically achieves the rate $1-(1-\rho)=\rho$. We then have the following corollary. 

\begin{corollary}\label{cor: adaptive} For any $0\leq\tau<\rho\leq1$, there is an explicit constant coding rate adaptive $(\varepsilon,\delta)$-SSS with relative threshold pair $(\tau,\rho)$, where $\varepsilon$ and $\delta$ are both negligible.
\end{corollary}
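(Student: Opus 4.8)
The plan is to instantiate Theorem~\ref{th: generic} with two explicit components already in hand: the invertible almost-perfect affine extractor of Lemma~\ref{lem:affineExt} and the capacity-achieving stochastic affine-erasure correcting code of Lemma~\ref{th: SA-ECC}. First I would fix the target relative thresholds $0\leq\tau<\rho\leq1$ and pick a small constant $\xi>0$ so that $\rho-\xi>\tau$ still holds; by Lemma~\ref{th: SA-ECC} there is an explicit SA-ECC with rate $R_{ECC}=1-p-\xi=\rho-\xi$ (taking erasure fraction $p=1-\rho$) that corrects any $1-\rho$ fraction of erasures and fails with probability $\delta=\exp(-\Omega(\xi^2 N/\log^2 N))$, which is negligible in $N$. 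This SA-ECC encodes $n=\lfloor N R_{ECC}\rfloor=N(\rho-\xi)$ bits into $N$ bits.

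Next I would choose the affine extractor. The source fed to the extractor is $\mathsf{U}_n$ conditioned on the adversary's $t=\tau N$ linear observations, so it is an affine source of entropy at least $n-\tau N = n(1-\tau/R_{ECC})$. Setting $\mu\colon=1-\tau/R_{ECC}$, which is a positive constant since $\tau<R_{ECC}$, Lemma~\ref{lem:affineExt} (Bourgain's extractor plus the one-time-pad invertibility trick plus truncation) yields an explicit invertible $(\mu n,\eps)$-almost-perfect affine extractor $\mathsf{AExt}\colon\zo^n\to\zo^\ell$ with output length $\ell=\Omega(n)=\Omega(N)$, error $\eps=\exp(-\Omega(n))=\exp(-\Omega(N))$ (negligible), and an efficient $v$-inverter with $v=\exp(-\Omega(N))$ (negligible). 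Plugging these into Theorem~\ref{th: generic} gives an adaptive $(\eps+v,\delta)$-SSS with threshold pair $(t,r)=(\tau N,\rho N)$; since $\eps+v$ and $\delta$ are both negligible, this is exactly an adaptive $(\varepsilon,\delta)$-SSS with relative threshold pair $(\tau,\rho)$ in the sense of Definition~\ref{def: SSS}, and its coding rate is $\ell/N=\Omega(n)/N=\Omega(\rho-\xi)=\Omega(\rho-\tau)$, a positive constant. Both algorithms are polynomial time because the extractor, its inverter, and the SA-ECC encoder/decoder are all explicit and efficient.

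There is essentially one place requiring care rather than routine bookkeeping: matching parameters so that the almost-perfect guarantee survives. Lemma~\ref{lem:affineExt} produces an extractor with error $\eps/2$ only after the truncation step, and Theorem~\ref{th: generic} demands an $(n-t,\varepsilon/2)$-almost-perfect extractor with $\varepsilon$ the final privacy target; since $\varepsilon$ is merely required to be negligible (not a prescribed function), one just checks that the exponentially small $\eps$ coming out of Bourgain's construction is $\leq\varepsilon/2$ for the chosen negligible $\varepsilon$, and that the entropy rate $\mu$ I selected is bounded away from $0$ uniformly in $N$ so that $m=\Omega(n)$ with an honest constant. The minor subtlety is that truncation reduces the output length by a constant factor, so I would state the rate as $\Omega(N(\rho-\tau))$ rather than attempting to pin down the constant; this is precisely why the adaptive construction is only constant-rate-optimal and not capacity-achieving, in contrast to Corollary~\ref{cor: non-adaptive}.

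Finally I would remark that the dependence of $n$ on $N$ (via $R_{ECC}$) and of $\ell$ on $n$ (via $\mu$) are all through fixed positive constants determined only by $\tau,\rho$, so the whole family is uniform and the $o(1)$/negligibility statements are with respect to $N\to\infty$; no circularity arises because the extractor's source being affine of the claimed dimension is guaranteed by the linearity of the SA-ECC encoder for each fixed randomness $\mathsf{r}$, exactly as exploited in the proof of Theorem~\ref{th: generic}.
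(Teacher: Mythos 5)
Your proposal is correct and follows essentially the same route as the paper: instantiate Theorem~\ref{th: generic} with the SA-ECC of Lemma~\ref{th: SA-ECC} at rate $R_{ECC}\approx\rho$ and the invertible almost-perfect affine extractor of Lemma~\ref{lem:affineExt} run at entropy rate $\mu=1-\tau/R_{ECC}>0$, then observe that both error parameters are negligible and the output length is $\Omega(n)=\Omega(N)$. The only cosmetic caveat is that the constant hidden in $\Omega(n)$ from Bourgain's extractor depends on $\mu$ (hence on $\tau,\rho$), so the clean bound $\Omega((\rho-\tau)N)$ should be read as ``constant rate for each fixed $(\tau,\rho)$'' rather than with a universal constant, which is exactly what the corollary asserts.
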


The construction above achieves a constant coding rate for any $(\tau,\rho)$ pair satisfying $0\leq\tau<\rho\leq 1$. However, since the binary affine extractor in Lemma \ref{lem: affine} does not extract all the entropy from the source 
and moreover the step that transforms an affine extractor into an invertible affine extractor incurs non-negligible overhead, the coding rate of the above construction does not approach $\rho-\tau$. We leave explicit constructions of binary SSS against adaptive adversary with better coding rate as an interesting technical open question.

\section{Conclusion}\label{sec: conclusion}
We studied the problem of sharing  {\em arbitrary long secrets} using constant length shares and required a nearly-threshold access structure. By nearly-threshold we mean a ramp scheme with arbitrarily small gap to number of players ratio. We show that by replacing perfect privacy and reconstructibility with slightly relaxed notions and inline with similar strong cryptographic notions, one can explicitly construct such nearly-threshold schemes. We gave two constructions with security against non-adaptive and adaptive adversaries, respectively, and proved optimality of the former. Our work also make a new connection between secret sharing and wiretap coding. 

We presented our model and constructions for the extremal case
of binary shares. However, we point out that the model and
our constructions can be extended to shares over any desired
alphabet size $q$. Using straightforward observations (such
as assigning multiple shares to each player), this task reduces
to extending the constructions over any prime $q$. In this case,
the building blocks that we use; namely, the stochastic error-correcting
code, seeded and seedless affine extractors need to be extended
to the $q$-ary alphabet. The constructions 
\cite{Trevisan,improvement,GS10} that we use, however, can be extended to
general alphabets with straightforward modifications. The only
exception is Bourgain's seedless affine extractor \cite{Bourgain}. The extension of \cite{Bourgain} to arbitrary alphabets is not straightforward and has been
accomplished in a work by Yehudayoff \cite{ref:Yehudayoff2011}.

Our constructions are not linear: even the explicit  non-adaptive construction that   uses an affine
function for every fixing of the encoder's randomness  does not result in a linear secret sharing.
Linearity is an essential property in applications such as multiparty computation and so 
explicit constructions of {\em linear}  secret sharing schemes
in our model will be an important achievement. Yet another important direction for future work is  deriving bounds and constructing optimal codes for finite length ($N$) case.
Such result will also be of high interest for  wiretap coding.

\remove{\color{magenta}
\remove{
Secret sharing schemes is a fundamental cryptographic primitive with diverse applications to multiparty  cryptographic primitives including threshold cryptography and multi-party computations, as well as complex cryptographic functionalities such as attribute-based cryptography.
One of the long standing challenges is to provide security for secrets that are chosen from large domains, using shares that are ``smaller" (possibly much smaller) and can be easily stored (securely) by share holders. 
}
We studied the problem of sharing  {\em arbitrary long secrets} using constant length shares, and focussed on the smallest possible shares, that is binary shares.  This problem in the first sight, and taking into account known models and results, appears an impossible task: it has been well-known  that in threshold schemes perfect privacy needs share length to be at least the size of the secret,  and even if one allows ramp (two thresholds) schemes the dependency of the share size and the gap size prevents having constant share size for arbitrary length secrets.
We show  that, surprisingly,  by replacing perfect privacy and reconstructibility with slightly relaxed notions and inline with similar strong cryptographic notions, one can achieve this seemingly impossible goal.

Our work also make a new connection between secret sharing and coding theory.  Such connection dates back to the work of Sartwate??\cite{} who showed the relation between Shamir's secret sharing and RS codes.
A number of other important and fundamental relations including \cite{Cramer1,1..} between error correcting codes and secret sharing had also been made.
 Our work establishes a novel  connection between secret sharing and coding theory, this time wiretap codes, a class of code that are fundamental to secure communication.
 This relation also provides new insight and opportunity for  cross fertilization of the  two seemingly distant areas.
 
 We gave two constructions with security against non-adaptive and adaptive adversaries, respectively, and proved optimality of the former (achieving rate $\rho-\tau$).
Proving similar optimality results for adaptive adversaries remains an interesting open problem.

Our constructions are not linear: even the explicit  non-adaptive construction that   uses an affine
function for every fixing of the encoder's randomness  does not result in a linear secret sharing.
Linearity is an essential property in applications such as multiparty computation and so 
explicit constructions of {\em linear}  secret sharing schemes
in our model will be an important achievement.

Yet another important direction for future work is  deriving bounds and constructing optimal codes for finite length ($N$) case.
Such result will also be of high interest for  wiretap coding.

\remove{

Relaxations of threshold schemes to cases 
 this, inclu
We have motivated and studied an effective variation of 
nearly-threshold secret sharing schemes for arbitrary
(in particular, binary) share lengths. Our explicit constructions
focus on the asymptotic case where the number $N$
of players (equivalently, the secret length $\ell$) grows. 
It is an interesting question to investigate finite-length
constructions as well as a finite-length
analysis of the secret sharing capacity in our model as well.
Our explicit construction for non-adaptive adversaries asymptotically
achieves the optimal rate of $\rho-\tau$, and is thus
capacity achieving. However, our construction for adaptive
adversaries achieve a constant fraction of this rate. It is
an outstanding question to obtain improved explicit constructions
of adaptive-secure schemes, or characterize the exact
capacity in the adaptive case.
Explicit constructions of linear secret sharing schemes
in our model would also be of interest (it is worth pointing
out that our explicit non-adaptive construction is an affine
function for every fixing of the encoder's randomness, but
is nevertheless non-linear). 
}

We presented our model and constructions for the extremal case
of binary shares. However, we point out that the model and
our constructions can be extended to shares over any desired
alphabet size $q$. Using straightforward observations (such
as assigning multiple shares to each player), this task reduces
to extending the constructions over any prime $q$. In this case,
the building blocks that we use; namely, the stochastic error-correcting
code, seeded and seedless affine extractors need to be extended
to the $q$-ary alphabet. The constructions that we use 
\cite{Trevisan,improvement,GS10}, however, can be extended to
general alphabets with straightforward modifications. The only
exception is Bourgain's seedless affine extractor \cite{Bourgain} which needs
some work to be extended to arbitrary alphabets. This has been
accomplished in a work by Yehudayoff \cite{ref:Yehudayoff2011}.
}

\bibliographystyle{plain}
\bibliography{1.bib}

\begin{thebibliography}{10}

\bibitem{ALCP09}
Vaneet Aggarwal, Lifeng Lai, A.~Robert Calderbank, and H.~Vincent Poor.
\newblock Wiretap channel type {II} with an active eavesdropper.
\newblock {\em {IEEE} International Symposium on Information Theory, {ISIT}
  2009}, pages 1944--1948, 2009.

\bibitem{btv12}
Mihir Bellare, Stefano Tessaro, and Alexander Vardy.
\newblock Semantic security for the wiretap channel.
\newblock In {\em Advances in Cryptology - {CRYPTO} 2012}, volume 7417 of {\em
  Lecture Notes in Computer Science}, pages 294--311. Springer, 2012.

\bibitem{Blakley}
George~R. Blakley.
\newblock Safeguarding cryptographic keys.
\newblock In {\em Proceedings of the 1979 AFIPS National Computer Conference},
  pages 313--317, 1979.

\bibitem{BGK16}
Andrej Bogdanov, Siyao Guo, and Ilan Komargodski.
\newblock Threshold secret sharing requires a linear size alphabet.
\newblock In {\em Theory of Cryptography - {TCC} 2016-B}, pages 471--484, 2016.

\bibitem{BIVW16}
Andrej Bogdanov, Yuval Ishai, Emanuele Viola, and Christopher Williamson.
\newblock Bounded indistinguishability and the complexity of recovering
  secrets.
\newblock In {\em Advances in Cryptology - {CRYPTO}}, pages 593--618, 2016.

\bibitem{BW17}
Andrej Bogdanov and Christopher Williamson.
\newblock Approximate bounded indistinguishability.
\newblock In {\em International Colloquium on Automata, Languages, and
  Programming, {ICALP}}, pages 53:1--53:11, 2017.

\bibitem{Bourgain}
Jean Bourgain.
\newblock On the construction of affine extractors.
\newblock {\em Geometric and Functional Analysis}, 17(1):33--57, 2007.

\bibitem{ref:CCX13}
Ignacio {Cascudo Pueyo}, Ronald Cramer, and Chaoping Xing.
\newblock Bounds on the threshold gap in secret sharing and its applications.
\newblock {\em {IEEE} Trans. Information Theory}, 59(9):5600--5612, 2013.

\bibitem{AC0}
Kuan Cheng, Yuval Ishai, and Xin Li.
\newblock Near-optimal secret sharing and error correcting codes in {AC}0.
\newblock In {\em Theory of Cryptography - {TCC} 2017, Part {II}}, pages
  424--458, 2017.

\bibitem{nearoptimalwt}
Mahdi Cheraghchi.
\newblock Nearly optimal robust secret sharing.
\newblock {\em Designs, Codes and Cryptography}, pages 1--20, 2018.

\bibitem{CDS12}
Mahdi Cheraghchi, Fr{\'{e}}d{\'{e}}ric Didier, and Amin Shokrollahi.
\newblock Invertible extractors and wiretap protocols.
\newblock {\em {IEEE} Trans. Information Theory}, 58(2):1254--1274, 2012.

\bibitem{ref:CDB15}
Ronald Cramer, Ivan Damg{\aa}rd, and Jesper~Buus Nielsen.
\newblock {\em Secure Multiparty Computation and Secret Sharing}.
\newblock Cambridge University Press, 2015.

\bibitem{CDFP08}
Ronald Cramer, Yevgeniy Dodis, Serge Fehr, Carles Padr{\'{o}}, and Daniel
  Wichs.
\newblock Detection of algebraic manipulation with applications to robust
  secret sharing and fuzzy extractors.
\newblock In {\em Advances in Cryptology - {EUROCRYPT} 2008}, volume 4965 of
  {\em Lecture Notes in Computer Science}, pages 471--488. Springer, 2008.

\bibitem{CK78}
Imre Csisz{\'{a}}r and Janos K{\"o}rner.
\newblock Broadcast channels with confidential messages.
\newblock {\em {IEEE} Trans. Information Theory}, 24(3):339--348, 1978.

\bibitem{list-decode}
Venkatesan Guruswami.
\newblock List decoding from erasures: bounds and code constructions.
\newblock {\em {IEEE} Trans. Information Theory}, 49(11), 2003.

\bibitem{GS10}
Venkatesan Guruswami and Adam~D. Smith.
\newblock Optimal rate code constructions for computationally simple channels.
\newblock {\em J. {ACM}}, 63(4):35:1--35:37, 2016.

\bibitem{XinLiAffine}
Xin Li.
\newblock A new approach to affine extractors and dispersers.
\newblock {\em {IEEE} Conference on Computational Complexity, {CCC} 2011},
  pages 137--147, 2011.

\bibitem{OKT92}
Wakaha Ogata, Kaoru Kurosawa, and Shigeo Tsujii.
\newblock Nonperfect secret sharing schemes.
\newblock {\em Advances in Cryptology - {AUSCRYPT} '92}, pages 56--66, 1992.

\bibitem{OW84}
Lawrence~H. Ozarow and Aaron~D. Wyner.
\newblock Wire-tap channel {II}.
\newblock {\em The Bell System Technical Journal}, 63:2135--2157, Dec 1984.

\bibitem{improvement}
Ran Raz, Omer Reingold, and Salil~P. Vadhan.
\newblock Extracting all the randomness and reducing the error in {T}revisan's
  extractors.
\newblock {\em J. Comput. Syst. Sci.}, 65(1):97--128, 2002.

\bibitem{Shamir}
Adi Shamir.
\newblock How to share a secret.
\newblock {\em Commun. {ACM}}, 22(11):612--613, 1979.

\bibitem{SMITH07}
Adam~D. Smith.
\newblock Scrambling adversarial errors using few random bits, optimal
  information reconciliation, and better private codes.
\newblock In {\em ACM-SIAM Symposium on Discrete Algorithms SODA '07}, pages
  395--404. Society for Industrial and Applied Mathematics, 2007.

\bibitem{ref:Sti92}
Douglas~R. Stinson.
\newblock An explication of secret sharing schemes.
\newblock {\em Des. Codes Cryptography}, 2(4):357--390, 1992.

\bibitem{Trevisan}
Luca Trevisan.
\newblock Extractors and pseudorandom generators.
\newblock {\em J. {ACM}}, 48(4):860--879, 2001.

\bibitem{Wyner}
Aaron~D. Wyner.
\newblock The wire-tap channel.
\newblock {\em The Bell System Technical Journal}, 54(8):1355--1387, Oct 1975.

\bibitem{ref:Yehudayoff2011}
Amir Yehudayoff.
\newblock Affine extractors over prime fields.
\newblock {\em Combinatorica}, 31(2):245, 2011.

\end{thebibliography}

\appendix
\addcontentsline{toc}{section}{Appendices}
\renewcommand{\thesubsection}{\Alph{subsection}}


\section{Proof Sketch of Lemma \ref{th: SA-ECC}}\label{apdx: SA-ECC}
\begin{proof}
We defer the details of the construction to the full version of this paper. For now we refer to \cite[Theorem 6.1]{GS10} and point out the adaptations needed. There are six building blocks involved in the construction: $\mathsf{SC}$, $\mathsf{RS}$, $\mathsf{Samp}$, $\mathsf{KNR}$, $\mathsf{POLY}_t$ and $\mathsf{REC}$. We replace the first and last building blocks. 

The first building block is a {\em Stochastic Code} ($\mathsf{SC}$). We need two properties from this building block: detect (output $\bot$) when the codeword is masked by a random offset and correct from erasures of no more than $1-\rho+\varepsilon$ fraction. While the former property is always satisfied by the original $\mathsf{SC}$ used in \cite{GS10}, the latter property might not hold. When $1-\rho$ is small, we can let the decoder of the $\mathsf{SC}$ used in \cite{GS10} set the erased bits to $0$ and decode it as  errors. But when $1-\rho>\frac{1}{2}$, this trick no longer works. We 
then combine a systematic AMD in \cite{CDFP08} and an erasure list-decodable code \cite{list-decode} (sub-optimal suffices since $\mathsf{SC}$ encodes the control information which is negligible) to obtain the
$\mathsf{SC}$ in our construction of $\mathsf{SA\mbox{-}ECC}$.

The last building block is a {\em Random Error Code} ($\mathsf{REC}$). We also need two properties from this building block: correct from random erasures of $1-\rho$ fraction and the encoder is a linear function. We need the latter property for affine property of the $\mathsf{SA\mbox{-}ECC}$ constructed. Explicit linear codes at rate $1-p$ that correct $p$ fraction of random erasures are known. We can use any explicit construction of capacity achieving codes for BEC$_{1-\rho}$ for $\mathsf{REC}$ and use a similar argument of \cite{SMITH07}.

We now refer to the \textbf{Algorithm 1.} in the proof of \cite[Theorem 6.1]{GS10} and show that, with the $\mathsf{SC}$ and $\mathsf{REC}$ replaced accordingly, we do have a $\mathsf{SA\mbox{-}ECC}$. The error correction capability and optimal rate follow similarly as in the proof of \cite[Theorem 6.1]{GS10}. We next show affine property. \textbf{Phase 1} and \textbf{Phase 2} are about the {\em control information}, which are part of the encoding randomness $\mathsf{r}$ of the $\mathsf{SA\mbox{-}ECC}$ to be fixed to constant value in the analysis of affine property. During \textbf{Phase 3}, the message $\mathbf{m}$ is linearly encoded (our $\mathsf{REC}$ is linear) and then permuted, followed by adding a translation term $\Delta_\mathsf{r}$. Since permutation is a linear transformation, we combine the two linear transformations and write 
$\mathbf{m}G_\mathsf{r}+\Delta_\mathsf{r}$, where $G_\mathsf{r}$ is a binary matrix. 
Finally, during \textbf{Phase 4}, some blocks that contain the control information are inserted into $\mathbf{m}G_\mathsf{r}+\Delta_\mathsf{r}$. We add dummy zero columns into $G_\mathsf{r}$ and zero blocks into $\Delta_\mathsf{r}$ to the corresponding positions where the control information blocks are inserted. Let $\mathbf{m}\hat{G}_\mathsf{r}+\hat{\Delta}_\mathsf{r}$ be the vector after padding dummy zeros. Let $\hat{\Delta'}_\mathsf{r}$ be the vector obtained from padding dummy zero blocks, complementary to the padding above, to the control information blocks. We then write the final codeword of the $\mathsf{SA\mbox{-}ECC}$ in the form $\mathbf{m}\hat{G}_\mathsf{r}+(\hat{\Delta}_\mathsf{r}+\hat{\Delta'}_\mathsf{r})$, which is indeed an affine function of the message $\mathbf{m}$.

\remove{
We first give the construction, which is an adaptation of 
\cite[Theorem 6.1]{GS10}. The construction uses six building blocks and our adaptation is replacing the first and last building blocks with ones that correct erasures instead of errors. We explicitly construct, for every $p\in[0,1)$, and every $\varepsilon>0$, an efficiently encodable and decodable stochastic code $(\mathsf{Enc},\mathsf{Dec})$ with rate $R_{ECC}=1-p-\varepsilon$ such that for every $\mathbf{m}\in\{0,1\}^{NR_{ECC}}$ and erasure pattern of at most $p$ fraction, we have $\mathsf{Pr}[\mathsf{Dec}(\tilde{\mathsf{Enc}(\mathbf{m})})=\mathbf{m}]\geq 1-\exp(-\Omega(\varepsilon^2N/\log^2 N))$, where $\tilde{\mathsf{Enc}(\mathbf{m})}$ denote the erased random codeword and $N$ denotes the length of the codeword.

\begin{enumerate}
\item A constant-rate explicit stochastic code $\mathsf{SC}\colon\{0,1\}^b\times\{0,1\}^b\rightarrow\{0,1\}^{c_0b}$, defined on blocks of length $\Theta(\log N)???=b$, that is efficiently decodable with probability $1-c_12^{-b}$ from a fraction $p+O(\xi)$ of oblivious erasures. Moreover, for every message and every error pattern of more than a certain fraction of errors, the decoder returns $\bot$ and reports a decoding failure with probability $1-c_12^{-b}$.
Further, there exists an absolute constant $c_3$ such that on input a uniformly random string $y$ from $\{0,1\}^{c_0b}$, the decoder returns $\bot$ with probability at least $1-c_12^{-b}$ (over the choice of $y$). 

\item A rate $O(\varepsilon)$ Reed-Solomon code $\mathsf{RS}$ which encodes a message as the evaluation of a polynomial at points $\alpha_1,\cdots,\alpha_\ell$ in such a way that an efficient algorithm RS-Decode can efficiently recover the message given at most $\frac{\varepsilon\ell}{4}$ correct symbols and at most $\frac{\varepsilon}{24}$ incorrect ones.
\item A randomness-efficient sampler $\mathsf{Samp}\colon \{0, 1\}^\sigma \rightarrow [N]^\ell$, such that for any subset $B \subset [N]$ of size at least $\mu N$, the output set of the sampler intersects with $B$ in roughly a $\mu$ fraction of its size, that is $|\mathsf{Samp}(sd) \cap B| \approx \mu|\mathsf{Samp}(sd)|$, with high probability over $sd \in \{0, 1\}^\sigma$ . We use an expander-based construction from Vadhan [37].
\item A generator $\mathsf{KNR}\colon\{0, 1\}^\sigma \rightarrow S_n$ for an (almost) $t$-wise independent family of permutations of the set $\{1,\cdots,n\}$, that uses a seed of $\sigma = O(t\log n)$ random bits (Kaplan, Naor, and Reingold [24]).
\item A generator $\mathsf{POLY}_t\colon \{0, 1\}^\sigma \rightarrow \{0, 1\}^n$ for a $t$-wise independent distribution of bit strings of length $n$, that uses a seed of $\sigma = O(t\log n)$ random bits.

\item Need full details for this one. add linear
\end{enumerate}
}

\end{proof}

\section{One-Time-Pad trick of inverting extractors}\label{apdx:OTP}


There is a well known way to transform an efficient function into one that is also efficiently invertible
through a ``One-Time-Pad'' trick. We give a proof for the special case of affine extractors, for completeness.

\begin{lemma}\label{lem: general inverter} 
Let $\mathsf{AExt}\colon \{0,1\}^n\rightarrow\{0,1\}^m$ be an affine $(n,k)$-extractor with error $\varepsilon$. Then $\mathsf{AExt}': \{0,1\}^{n+m}\rightarrow\{0,1\}^m$ defined as follows is a $\varepsilon$-invertible affine $(n+m,k+m)$-extractor with error $\varepsilon$.
$$
\mathsf{AExt}'(\mathsf{z})=\mathsf{AExt}'(\mathsf{x}||\mathsf{y})=\mathsf{AExt}(\mathsf{x})+\mathsf{y},
$$
where the input $\mathsf{z}\in\{0,1\}^{n+m}$ is separated into two parts: $\mathsf{x}\in\{0,1\}^{n}$ and $\mathsf{y}\in\{0,1\}^{m}$. 
\end{lemma}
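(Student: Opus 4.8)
The plan is to establish the two halves of the statement separately: invertibility (essentially immediate) and the affine-extractor property (the real content).

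For invertibility I would exhibit the inverter $\mathsf{Inv}\colon\{0,1\}^m\times\{0,1\}^n\to\{0,1\}^{n+m}$, $\mathsf{Inv}(\mathsf{s},\mathsf{x})=(\mathsf{x}||\mathsf{AExt}(\mathsf{x})+\mathsf{s})$ (so the randomness length is $n$). The inversion identity $\mathsf{AExt}'(\mathsf{Inv}(\mathsf{s},\mathsf{x}))=\mathsf{AExt}(\mathsf{x})+(\mathsf{AExt}(\mathsf{x})+\mathsf{s})=\mathsf{s}$ holds for every $\mathsf{x}$, which in particular shows every $\mathsf{s}\in\{0,1\}^m$ has a nonempty pre-image. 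For the uniformity clause, feed independent uniform $(\mathsf{U}_m,\mathsf{U}_n)$: conditioned on the second argument being $\mathsf{x}$, the output is $(\mathsf{x}||\mathsf{AExt}(\mathsf{x})+\mathsf{U}_m)$, whose second block is uniform on $\{0,1\}^m$ and independent of $\mathsf{x}$, so $\mathsf{Inv}(\mathsf{U}_m,\mathsf{U}_n)$ is \emph{exactly} $\mathsf{U}_{n+m}$. Hence $\mathsf{Inv}$ is even a $0$-inverter (which subsumes $\varepsilon$-invertibility), and it is efficient whenever $\mathsf{AExt}$ is.

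For the extractor property, let $\mathsf{Z}=(\mathsf{X},\mathsf{Y})$ be uniform over an affine subspace $V\subseteq\{0,1\}^{n+m}$ with $\dim V\ge k+m$. I would first unpack the geometry. Write $L$ for the direction space of $V$, $\pi$ for the projection onto the first $n$ coordinates, $W=\{\mathsf{y}\in\{0,1\}^m:(0^n||\mathsf{y})\in L\}$, $w=\dim W$, and fix a complement $W'$ with $\{0,1\}^m=W\oplus W'$. Then $\mathsf{X}$ is uniform on the affine subspace $\pi(V)$ (all fibers of $\pi|_V$ have size $2^w$), $\dim\pi(V)=\dim V-w\ge k+m-w$, and there is an affine map $\psi\colon\pi(V)\to W'$ with fiber $\{\mathsf{y}:(\mathsf{x}||\mathsf{y})\in V\}=\psi(\mathsf{x})+W$ for every $\mathsf{x}\in\pi(V)$; thus, conditioned on $\mathsf{X}=\mathsf{x}$, $\mathsf{Y}$ is uniform on $\psi(\mathsf{x})+W$. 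Consequently $\mathsf{AExt}'(\mathsf{Z})=\mathsf{AExt}(\mathsf{X})+\psi(\mathsf{X})+\mathsf{U}_W$ with $\mathsf{U}_W$ uniform on $W$ and independent of $\mathsf{X}$. Convolving with $\mathsf{U}_W$ removes the $W$-component, so the task reduces to showing that $\mathrm{proj}_{W'}(\mathsf{AExt}(\mathsf{X}))+\psi(\mathsf{X})$ is $\varepsilon$-close to uniform on $W'$.

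The hard part is precisely this last step, since $\psi(\mathsf{X})$ is correlated with $\mathsf{AExt}(\mathsf{X})$, so one cannot argue "near-uniform XOR anything is near-uniform". My fix is to condition on the value of the affine statistic $\psi(\mathsf{X})$: its image is an affine subspace of $W'$ of dimension $j\le\dim W'=m-w$, so every nonempty fiber $\psi^{-1}(\mathsf{v})$ is an affine subspace of $\pi(V)$ of dimension $\dim\pi(V)-j\ge(k+m-w)-(m-w)=k$. Hence $\mathsf{X}\mid(\psi(\mathsf{X})=\mathsf{v})$ is an affine $(n,k)$-source, so $\mathsf{AExt}$ applied to it is $\varepsilon$-close to $\mathsf{U}_m$, and therefore $\mathrm{proj}_{W'}(\mathsf{AExt}(\mathsf{X}))+\mathsf{v}$ is $\varepsilon$-close to $\mathsf{U}_{W'}$ (statistical distance is non-increasing under the deterministic map $\mathrm{proj}_{W'}$, then invariant under the constant shift by $\mathsf{v}$). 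Averaging this bound over the equiprobable values $\mathsf{v}$ in the image of $\psi$ yields $\mathsf{SD}(\mathsf{AExt}'(\mathsf{Z});\mathsf{U}_m)\le\varepsilon$. The residual work is routine: checking that $\mathsf{X}$ is uniform on $\pi(V)$, extracting $\psi$ from $V$ via a linear section of $\pi|_L$, and verifying the count $j\le m-w$ — no surprises expected. This argument also transparently covers the degenerate case $w=m$, where $W'=\{0\}$ and $\mathsf{AExt}'(\mathsf{Z})$ is already exactly uniform.
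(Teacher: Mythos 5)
Your proof is correct, but it takes a more elaborate route than the paper's. The paper handles the extractor property in three lines by conditioning directly on $\mathsf{Y}=\mathsf{y}$: since $\mathsf{Z}=(\mathsf{X}||\mathsf{Y})$ is uniform on an affine subspace $V\subseteq\{0,1\}^{n+m}$ with $\dim V\ge k+m$ and the projection of $V$ to the last $m$ coordinates has dimension at most $m$, every nonempty fiber $\mathsf{X}\mid(\mathsf{Y}=\mathsf{y})$ is uniform on an affine subspace of $\{0,1\}^n$ of dimension at least $k$; hence $\mathsf{AExt}(\mathsf{X})+\mathsf{y}$ is $\varepsilon$-close to $\mathsf{U}_m$ for every such $\mathsf{y}$ (shifting by a constant preserves statistical distance), and averaging over $\mathsf{Y}$ finishes. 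Your decomposition of the direction space into $W\oplus W'$, the construction of the affine section $\psi$, and the two-stage argument (first XOR out $\mathsf{U}_W$, then condition on $\psi(\mathsf{X})$) are all correct but are exactly the machinery that direct conditioning on $\mathsf{Y}$ makes unnecessary. Indeed, conditioning on $\mathsf{Y}$ is strictly finer than conditioning on $\psi(\mathsf{X})$ (the latter is $\mathsf{Y}$ modulo $W$), and the finer conditioning already lands you on an affine $(n,k)$-source with no further bookkeeping. The worry you raise — that $\psi(\mathsf{X})$ is correlated with $\mathsf{AExt}(\mathsf{X})$ so a naive ``near-uniform XOR anything'' argument fails — is legitimate, but the paper sidesteps it entirely because once $\mathsf{Y}$ is fixed, the quantity being XOR'd in is a constant, not a correlated random variable. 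What your longer argument buys is a slightly more explicit picture of the joint geometry of $(\mathsf{X},\mathsf{Y})$, which could be useful if one later wanted a refined bound, but for this lemma it is overkill. Your treatment of invertibility matches the paper's and is fine; you even observe correctly that the inverter is a $0$-inverter, which is slightly sharper than the $\varepsilon$-invertibility claimed.
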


\begin{proof}
Let $\mathsf{Z}$ be a random variable with flat distribution supported on an affine subspace of $\{0,1\}^{n+m}$ of dimension at least $k+m$. Separate $\mathsf{Z}$ into two parts $\mathsf{Z}=(\mathsf{X}||\mathsf{Y})$, where $\mathsf{X}\in\{0,1\}^{n}$ and $\mathsf{Y}\in\{0,1\}^m$. Then conditioned on any $\mathsf{Y}=\mathsf{y}$, $\mathsf{X}$ has a distribution supported on an affine subspace of $\{0,1\}^{n}$ of dimension at least $k$. This asserts that conditioned on any $\mathsf{Y}=\mathsf{y}$,
$$
\mathsf{SD}(\mathsf{AExt}(\mathsf{X})+\mathbf{y};\mathsf{U}_{\{0,1\}^m})\leq \varepsilon.
$$
Averaging over the distribution of $\mathsf{Y}$ concludes the extractor proof. 

We next show an efficient inverter $\mathsf{AExt}'^{-1}$ for $\mathsf{AExt}'$. For any $\mathsf{s}\in\{0,1\}^m$, define
$$
\mathsf{AExt}'^{-1}(\mathsf{s})=(\mathsf{U}_n||\mathsf{AExt}(\mathsf{U}_n)+\mathsf{s}).
$$
The randomised function $\mathsf{AExt}'^{-1}$ is efficient and $\mathsf{AExt}'^{-1}(\mathsf{U}_m)\stackrel{\varepsilon}{\sim}\mathsf{U}_{n+m}$.
\end{proof}




\remove{
Here we give a direct proof using techniques similar to that of Theorem \ref{th: generic}. In Appendix \ref{apdx: abstract proof}, we prove a general property of linear seeded extractors, from which Theorem \ref{th: wiretap} follows as a corollary.

\begin{proof}
The algorithm $\mathsf{Share}(\cdot)$ has three parts of randomness: the uniform seed $\mathsf{Z}\stackrel{\$}{\leftarrow}\{0,1\}^d$, the randomness of the inverter sampled from $\mathcal{R}_1$  and the randomness of the stochastic code sampled from $\mathcal{R}_2$ . In our analysis, we always consider a fixed randomness $\mathsf{r}\in\mathcal{R}_2$ of the stochastic code. We will first consider a fixed seed $\mathsf{z}\in\{0,1\}^d$ and solely use the randomness of the inverter in the argument for privacy of the scheme. We show that for most of the seed $\mathsf{z}\in\{0,1\}^d$, the scheme is perfectly secure and hence when the seed is uniformly chosen, the privacy error can be bounded by the fraction of ``bad'' seeds.

We next prove the $(t,\varepsilon)$-privacy. 
For any $\mathsf{r}\in\mathcal{R}_2$, the affine encoder of $\mathsf{SA\mbox{-}ECC}$ is characterised by a matrix $G_\mathsf{r}\in\{0,1\}^{n\times N}$ and an offset $\Delta_\mathsf{r}$.
Here the message of the $\mathsf{SA\mbox{-}ECC}$ is a $(d+n)$-bit string, where the first $d$ bits constitute the seed of $\mathsf{Ext}$. Our analysis is first done for each particular seed and then average over the seed space. So we assume the first $d$ bits of the message of $\mathsf{SA\mbox{-}ECC}$ are fixed values $\mathsf{z}$ and only consider the remaining $n$ bits $\mathbf{x}=(x_1,\ldots,x_n)$ as unknowns. 
Then for any $\mathsf{r}\in\mathcal{R}_2$, 
we have
$$
\mathsf{SA\mbox{-}ECCenc(\mathsf{z}||\mathbf{x})}=(\mathsf{z}||\mathbf{x})G_\mathsf{r}+\Delta_\mathsf{r}=(\mathbf{x}G'_1,\ldots,\mathbf{x}G'_N)+\hat{\Delta}(\mathsf{z})+\Delta_\mathsf{r},
$$
where $G'_i=(g_{d+1,i},\ldots,g_{d+n,i})^T$, $i=1,\ldots, N$ denotes the $i$th column of $G'_\mathsf{r}$, which is obtained from $G_\mathsf{r}$ by removing the top $d$ rows, and $\hat{\Delta}(\mathsf{z})\in\{0,1\}^N$ denotes the offset obtained from multiplying $\mathsf{z}$ to the top $d$ rows of $G_\mathsf{r}$. This means that knowing a component $c_i$ of the $\mathsf{SA\mbox{-}ECC}$ codeword is equivalent to obtaining a linear equation on the $n$ unknowns $x_1,\ldots,x_n$:
\begin{equation}\label{eq: linear}
c_i\oplus\Delta_i\oplus\hat{\Delta}(\mathsf{z})_i=\mathbf{x}G'_i=g_{d+1,i}x_1+\ldots+g_{d+n,i}x_n.
\end{equation}

Now, we consider an arbitrary secret $\mathsf{s}$ and for any non-adaptive $\mathcal{A}$'s choice of $A\subset[N]$ with $|A|\leq t$,
we 
investigate the distribution of $\mathsf{Share}(\mathsf{s})_A$ by studying the probability $\mathsf{Pr}[\mathsf{Share}(\mathsf{s})_A=\mathsf{w}]$  for each $\mathsf{w}\in\{0,1\}^{|A|}$.  
We have the following decomposition with respect to the uniform seed $\mathsf{Z}$,
\begin{equation}\label{eq: decomposition}
\mathsf{Pr}[\mathsf{Share}(\mathsf{s})_A=\mathsf{w}]=\sum_{\mathsf{z}\in\{0,1\}^d}2^{-d}\cdot\mathsf{Pr}[\mathsf{Share}(\mathsf{s})_A=\mathsf{w}|\mathsf{Z}=\mathsf{z}].
\end{equation}
We then proceed with considering a fixed seed $\mathsf{Z}=\mathsf{z}$.

Note that the inverter function $\mathsf{Ext}^{-1}(\mathsf{z},\cdot)$  defines a partition of $\{0,1\}^n$ into $2^\ell$ subsets and maps a secret to an $n$-tuple in the corresponding subset 
uniformly at random. Let $\mathsf{X}$ be the uniform distribution over $\{0,1\}^n$. We have $\mathsf{SA\mbox{-}ECCenc}(\mathsf{z}||\mathsf{Ext}^{-1}(\mathsf{z},\mathsf{s}))=\left(\mathsf{SA\mbox{-}ECCenc}(\mathsf{z}||\mathsf{X})\right)|\left(\mathsf{Ext}(\mathsf{z},\mathsf{X})=\mathsf{s}\right)$. We then use the Bayes law and compute the following.
\begin{align*}
\mathsf{Pr}\left[\mathsf{Share}(\mathsf{s})_A=\mathsf{w}|\mathsf{Z}=\mathsf{z}\right]&=\mathsf{Pr}[\mathsf{SA\mbox{-}ECCenc}(\mathsf{z}||\mathsf{X})_A=\mathsf{w}|\mathsf{Ext}(\mathsf{z},\mathsf{X})=\mathsf{s}]\\
                                                                                                                        &=\frac{\mathsf{Pr}[\mathsf{Ext}(\mathsf{z},\mathsf{X})=\mathsf{s}|\mathsf{SA\mbox{-}ECCenc}(\mathsf{z}||\mathsf{X})_A=\mathsf{w}]\cdot\mathsf{Pr}[\mathsf{SA\mbox{-}ECCenc}(\mathsf{z}||\mathsf{X})_A=\mathsf{w}]}{\mathsf{Pr}[\mathsf{Ext}(\mathsf{z},\mathsf{X})=\mathsf{s}]}\\
                                                                                                                        &\stackrel{(i)}{=}\frac{2^{-\ell}\cdot\mathsf{Pr}[\mathsf{SA\mbox{-}ECCenc}(\mathsf{z}||\mathsf{X})_A=\mathsf{w}]}{\mathsf{Pr}[\mathsf{Ext}(\mathsf{z},\mathsf{X})=\mathsf{s}]}\\
                                                                                                                        &\stackrel{(ii)}{=}\frac{2^{-\ell}\cdot \frac{2^{n-\mathrm{rank}(A)}}{2^n}}{2^{-\ell}}\\
                                                                                               &=2^{-\mathrm{rank}(A)}.            
\end{align*}
%
The above computation requires two conditions. 
The equality $(i)$ holds when the linear function $\mathsf{Ext}(\mathsf{z},\cdot):\{0,1\}^n\rightarrow\{0,1\}^\ell$ restricted to support of the random variable $\mathsf{X}|\left(\mathsf{SA\mbox{-}ECCenc}(\mathsf{z}||\mathsf{X})_A=\mathsf{w}\right)$ is surjective. In this case, we say the seed $\mathsf{z}\in\{0,1\}^d$ is a {\em good} seed with respect to $\mathsf{w}$.
The equality $(ii)$ holds when there exists an $\mathsf{x}\in\{0,1\}^n$ such that $\mathsf{SA\mbox{-}ECCenc}(\mathsf{z}||\mathsf{x})_A=\mathsf{w}$. The short hand $\mathrm{rank}(A)$ denotes the rank of the columns $G'_i$ that satisfy $i\in A$ (see (\ref{eq: linear}) for definition of $G'_i$).  Consider $\mathsf{X}$ as unknowns for equations, the number of solutions to the linear system $\mathsf{SA\mbox{-}ECCenc}(\mathsf{z}||\mathsf{X})_A=\mathsf{w}$ is either $0$ or equal to $2^{n-\mathrm{rank}(A)}$. 

Note that whether $\mathsf{z}\in\{0,1\}^d$ is a good seed is in fact determined by $\mathsf{z}$ and the adversary's choice of $A$. Indeed, let $\mathsf{Ker}_A=\left\{\mathsf{x}\in\{0,1\}^n|\mathsf{SA\mbox{-}ECCenc}(\mathsf{z}||\mathsf{x})_A=0^{|A|}\right\}$. Then the solutions to the linear system $\mathsf{SA\mbox{-}ECCenc}(\mathsf{z}||\mathsf{X})_A=\mathsf{w}$ can be written as a translate of the subspace $\mathsf{Ker}_A$. We see that if the linear function $\mathsf{Ext}(\mathsf{z},\cdot):\{0,1\}^n\rightarrow\{0,1\}^\ell$ restricted to $\mathsf{Ker}_A$ is surjective, it is also surjective restricted to any translate of $\mathsf{Ker}_A$.
Let $\mathcal{G}_A\subset\{0,1\}^d$ denote the set of good seeds with respect to $A$ (for all $\mathsf{w}$). We next show that $\mathsf{Pr}[\mathsf{Z}\in\mathcal{G}_A]\geq1-\frac{\varepsilon}{2}$. Since there are at least $t$ linear equations on $n$ unknowns, $\mathsf{Ker}_A$ is a subspace of $\{0,1\}^n$ of dimension at least $n-t$. The flat distribution over $\mathsf{Ker}_A$ is an affine source of min-entropy at least $n-t$. 
Now applying a linear function $\mathsf{Ext}(\mathsf{z},\cdot)$ to this affine source, the output is uniform $\ell$ bits if and only if $\mathsf{Ext}(\mathsf{z},\mathsf{Ker}_A)=\{0,1\}^\ell$. 
On the other hand, if $\mathsf{Ext}(\mathsf{z},\mathsf{Ker}_A)\neq\{0,1\}^\ell$, then the distribution of the output has at least statistical distance $\frac{1}{2}$ from uniform. This is because the image $\mathsf{Ext}(\mathsf{z},\mathsf{Ker}_A)$ is a subspace of dimension at most $\ell-1$. 
The definition of the $(n-t,\frac{\varepsilon}{4})$-extractor $\mathsf{Ext}$ asserts that 
\begin{equation}\label{eq: bad}
0\cdot\mathsf{Pr}[\mathsf{Z}\in\mathcal{G}_A]+\frac{1}{2}\cdot\mathsf{Pr}[\mathsf{Z}\notin\mathcal{G}_A]\leq\frac{1}{4}\varepsilon.
\end{equation}
This yields the desired bound $\mathsf{Pr}[\mathsf{Z}\in\mathcal{G}_A]\geq1-\frac{\varepsilon}{2}$.

Finally, we have shown that the distribution $\mathsf{Share}(\mathsf{s})_A|\left(\mathsf{Z}=\mathsf{z}\right)$ is independent of the secret $\mathsf{s}$ when $\mathsf{z}\in\mathcal{G}_A$. The statistical distance between $\mathsf{Share}(\mathsf{s}_0)_A$ and $\mathsf{Share}(\mathsf{s}_1)_A$ is $0$ conditioned on the event that $\mathsf{Z}\in\mathcal{G}_A$, which has probability $\mathsf{Pr}[\mathsf{Z}\in\mathcal{G}_A]\geq1-\frac{\varepsilon}{2}$.  It then follows (\ref{eq: decomposition}) that $\mathsf{SD}(\mathsf{Share}(\mathsf{s}_0)_A;\mathsf{Share}(\mathsf{s}_1)_A)\leq \varepsilon$. \textcolor{red}{check, seems $\mathsf{Pr}[\mathsf{Z}\in\mathcal{G}_A]\geq1-\frac{\varepsilon}{2}$ half is not necessary?}

\end{proof}

}

\end{document}